\newtheorem{theorem}{Theorem}[section]
\newtheorem{lemma}[theorem]{Lemma}
\newtheorem{proposition}[theorem]{Proposition}
\newtheorem{corollary}[theorem]{Corollary}
\theoremstyle{definition}
\newtheorem{definition}[theorem]{Definition}
\newtheorem{example}[theorem]{Example}
\theoremstyle{remark}
\numberwithin{equation}{section}
\begin{document}
\title[$(1-2u^2)$-constacyclic codes over $\mathbb{F}_p+u\mathbb{F}_p+u^2\mathbb{F}_p$]{$(1-2u^2)$-constacyclic codes over $\mathbb{F}_p+u\mathbb{F}_p+u^2\mathbb{F}_p$}
\author[Mostafanasab and Karimi]{Hojjat Mostafanasab and Negin Karimi}

\subjclass[2010]{Primary 94B05, 94B15; Secondary 11T71, 13M99}
\keywords{Finite fields, cyclic codes, constacyclic codes.}

\begin{abstract}
Let $\mathbb{F}_p$ be a finite field and $u$ be an indeterminate. This article studies $(1-2u^2)$-constacyclic codes over the ring $\mathbb{F}_p+u\mathbb{F}_p+u^2\mathbb{F}_p$, where $u^3=u$.  We describe generator polynomials of 
this kind of codes and investigate the structural properties of these codes by a decomposition theorem.
\end{abstract}

\maketitle

\section{Introduction}
Error-Correcting codes play important roles in applications ranging from data networking to satellite 
communication to compact disks. Most coding theory concerns
on linear codes since they have clear structure that makes them simpler to discover, to
understand and to encode and decode.
Codes over finite rings have been studied since the early 1970s. 
Recently codes over rings have
generated a lot of interest after a breakthrough paper by Hammons et al. \cite{HK} showed
that some well known binary non-linear codes are actually images of some linear
codes over $\mathbb{Z}_4$ under the Gray map.
Cyclic codes are amongst the most studied algebraic codes.
Their structure is well known over finite fields \cite{M}. 
Constacyclic codes over finite fields form a remarkable class of linear codes, as they include the important family of cyclic codes. Constacyclic codes also have practical applications as they can be efficiently encoded using simple shift registers. They have rich algebraic structures for efficient error detection and correction, which explains their preferred role in engineering.
In general, due to their rich algebraic structure, constacyclic codes have been studied over various
finite chain rings (see \cite{AN},\cite{BU1}-\cite{DL},\cite{QZZ}-\cite{ZW}). In \cite{ZW}, Zhu and Wang investigated 
$(1-2u)$-constacyclic codes over $\mathbb{F}_p+v\mathbb{F}_p$, where $v^2=v$.
In \cite{G,KYS,LSS}, some kind of codes over $\mathbb{F}_p+u\mathbb{F}_p+u^2\mathbb{F}_p$, where $u^3=u$, have been studied.
The present paper is devoted to a class of constacyclic codes over
$\mathbb{F}_p+u\mathbb{F}_p+u^2\mathbb{F}_p$, i.e., $(1-2u^2)$-constacyclic codes over $\mathbb{F}_p+u\mathbb{F}_p+u^2\mathbb{F}_p$.

Let $\sigma,\gamma$ and $\varrho$ be maps from $\mathcal{R}^n$ to $\mathcal{R}^n$ given by
$$\hspace{-1.1cm}\sigma(r_0,r_1,\dots,r_{n-1})=(r_{n-1},r_0,r_1,\dots,r_{n-2}),$$
$$\gamma(r_0,r_1,\dots,r_{n-1})=(-r_{n-1},r_0,r_1,\dots,r_{n-2}),~~~~~ \mbox{and}$$
$$\hspace{4mm}\varrho(r_0,r_1,\dots,r_{n-1})=((1-2u^2)r_{n-1},r_0,r_1,\dots,r_{n-2}),$$
respectively. 
Let $\mathcal{C}$ be a linear code of lenght $n$ over $\mathcal{R}$. Then $\mathcal{C}$ is said to be cyclic if $\sigma(\mathcal{C}) =\mathcal{C}$, negacyclic if $\gamma(\mathcal{C})=\mathcal{C}$ and $(1-2u^2)$-constacyclic if $\varrho(\mathcal{C})=\mathcal{C}$. 

Let $\mathcal{C}$ be a code of length $n$ over $\mathcal{R}$, and $P(\mathcal{C})$ be its polynomial
representation, i.e., $$P(\mathcal{C})=\Big\{\sum\limits_{i=0}^{n-1}r_ix^i|(r_0,\dots,r_{n-1})\in\mathcal{C}\Big\}.$$
It is easy to see that:
\begin{theorem}
A code $\mathcal{C}$ of length $n$ over $\mathcal{R}$ is $(1-2u^2)$-constacyclic if and only if $P(\mathcal{C})$ is an ideal of $\mathcal{R}[x]/\langle x^n-(1-2u^2)\rangle$.
\end{theorem}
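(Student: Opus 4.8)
The plan is to make precise the standard dictionary between vectors and polynomials and to show that the constacyclic shift $\varrho$ corresponds exactly to multiplication by $x$ in $\mathcal{R}[x]/\langle x^n-(1-2u^2)\rangle$. First I would identify each vector $(r_0,r_1,\dots,r_{n-1})\in\mathcal{R}^n$ with the residue class of the polynomial $r(x)=r_0+r_1x+\cdots+r_{n-1}x^{n-1}$ in $\mathcal{R}[x]/\langle x^n-(1-2u^2)\rangle$; since every residue class has a unique representative of degree less than $n$, this identification is a bijection carrying $\mathcal{C}$ onto $P(\mathcal{C})$. Because $\mathcal{C}$ is linear, $P(\mathcal{C})$ is closed under addition and under multiplication by scalars from $\mathcal{R}$, i.e., it is an $\mathcal{R}$-submodule of the quotient ring.

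Next I would compute, in the quotient ring,
$$x\,r(x)=r_0x+r_1x^2+\cdots+r_{n-2}x^{n-1}+r_{n-1}x^n\equiv (1-2u^2)r_{n-1}+r_0x+\cdots+r_{n-2}x^{n-1},$$
using $x^n\equiv 1-2u^2$. The right-hand side is precisely the polynomial representation of $\varrho(r_0,\dots,r_{n-1})$. Hence $\varrho(\mathcal{C})\subseteq\mathcal{C}$ if and only if $x\cdot P(\mathcal{C})\subseteq P(\mathcal{C})$, and since $\varrho$ is a bijection on the finite set $\mathcal{C}$ (equivalently, since $1-2u^2$ is a unit in $\mathcal{R}$, so multiplication by $x$ is invertible on the quotient ring), these inclusions are in fact equalities.

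For the forward direction, if $\mathcal{C}$ is $(1-2u^2)$-constacyclic then $P(\mathcal{C})$ is stable under multiplication by $x$, hence by $x^k$ for every $k\geq 0$, and therefore---being already an $\mathcal{R}$-submodule---stable under multiplication by an arbitrary element of $\mathcal{R}[x]/\langle x^n-(1-2u^2)\rangle$; thus $P(\mathcal{C})$ is an ideal. Conversely, if $P(\mathcal{C})$ is an ideal then it is in particular closed under multiplication by $x$, so $\varrho(\mathcal{C})\subseteq\mathcal{C}$, and equality follows as above. The only point needing a brief verification is that $1-2u^2$ is a unit in $\mathcal{R}$: from $u^3=u$ one gets $u^4=u^2$, whence $(1-2u^2)^2=1-4u^2+4u^4=1$, so $1-2u^2$ is its own inverse. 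Apart from this small check, the argument is entirely routine, so I do not expect a genuine obstacle.
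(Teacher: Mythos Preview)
Your argument is correct and is precisely the standard dictionary between constacyclic shifts and multiplication by $x$ in the appropriate quotient ring; the paper itself does not give a proof at all, merely prefacing the theorem with ``It is easy to see that''. Your write-up supplies exactly the routine verification the authors suppressed, including the check that $(1-2u^2)^2=1$ so that $1-2u^2$ is a unit, which justifies passing from $\varrho(\mathcal{C})\subseteq\mathcal{C}$ to equality via injectivity and finiteness.
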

Let $x=(x_0,x_1,\dots,x_{n-1})$ and $y=(y_0,y_
1,\dots,y_{n-1})$ be two elements of $\mathcal{R}^n$. The Euclidean inner product of $x$ and $y$ in $\mathcal{R}^n$ is defined as $x\cdot y=x_0y_
0+x_1y_1 +\dots+x_{n-1}y_{n-1}$, where the operation is performed in $\mathcal{R}$. The dual code of $\mathcal{C}$ is defined as
$C^\bot=\{x\in\mathcal{R}^n| x\cdot y=0 \mbox{ for every } y\in\mathcal{C}\}$.

We define the Gray map $\Phi:\mathcal{R}\to\mathbb{F}_p^2$ by $a+bu+cu^2\mapsto(-c,2a+c)$. This map can be extended to $\mathcal{R}^n$ in a natural way: 
$$\hspace{-11.3cm}\Phi:\mathcal{R}^n\to\mathbb{F}_p^{2n}$$
$$(r_0,r_1,\dots,r_{n-1})\mapsto(-c_0,-c_1,\dots,-c_{n-1},2a_0+c_0,2a_1+c_1,\dots,2a_{n-1}+c_{n-1})$$
where $r_i=a_i+b_iu+c_iu^2$, $0\leq i\leq n-1$.

We denote by $\eta_1,\eta_2,\eta_3$ respectively the following elements of $\mathcal{R}$:
$$\eta_1=1-u^2,~~~\eta_2=2^{-1}(u+u^2),~~~\eta_3=2^{-1}(-u+u^2).$$
Note that $\eta_1,\eta_2$ and $\eta_3$ are mutually orthogonal idempotents over $\mathcal{R}$ and $\eta_1+\eta_2+\eta_3=1$.
Let $\mathcal{C}$ be a linear code of length $n$ over $\mathcal{R}$. Define
\begin{eqnarray*}
\mathcal{C}_1&=&\{x\in\mathbb{F}_p^n\mid\exists y,z\in\mathbb{F}_p^n,\eta_1x+\eta_2y+\eta_3z\in\mathcal{C}\},\\
\mathcal{C}_2&=&\{y\in\mathbb{F}_p^n\mid\exists x,z\in\mathbb{F}_p^n,\eta_1x+\eta_2y+\eta_3z\in\mathcal{C}\},\\
\mathcal{C}_3&=&\{z\in\mathbb{F}_p^n\mid\exists x,y\in\mathbb{F}_p^n,\eta_1x+\eta_2y+\eta_3z\in\mathcal{C}\}.
\end{eqnarray*}
Then $\mathcal{C}_1,\mathcal{C}_2$ and $\mathcal{C}_3$ are all linear codes of length $n$ over $\mathbb{F}_p$. Moreover, the code $\mathcal{C}$ of length $n$
over $\mathcal{R}$ can be uniquely expressed as
$\mathcal{C}=\eta_1\mathcal{C}_1\oplus\eta_2\mathcal{C}_2\oplus\eta_3\mathcal{C}_3.$


\section{Main results}

\bigskip

\begin{theorem}\label{T2}
Let $\varrho$ denote the $(1-2u^2)$-constacyclic shift of $\mathcal{R}^n$ and $\sigma$ the cyclic shift of $\mathbb{F}^{2n}_p$. 
If $\Phi$ is the Gray map of $\mathcal{R}^n$ into $\mathbb{F}^{2n}_p$, then $\Phi\varrho=\sigma\Phi$.
\end{theorem}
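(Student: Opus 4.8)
The plan is to establish the operator identity $\Phi\varrho=\sigma\Phi$ by evaluating both composites on an arbitrary vector $r=(r_0,r_1,\dots,r_{n-1})\in\mathcal{R}^n$ and comparing the two resulting $2n$-tuples entry by entry. Writing $r_i=a_i+b_iu+c_iu^2$ with $a_i,b_i,c_i\in\mathbb{F}_p$, recall that by definition $\Phi(r)=(-c_0,\dots,-c_{n-1},2a_0+c_0,\dots,2a_{n-1}+c_{n-1})$, so that $\Phi$ is built from two length-$n$ blocks: the first collects the negated $u^2$-coefficients, the second collects the quantities $2a_i+c_i$.

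First I would compute the right-hand side. The cyclic shift $\sigma$ of $\mathbb{F}_p^{2n}$ carries the last entry $2a_{n-1}+c_{n-1}$ of $\Phi(r)$ to the front and pushes everything else one place to the right, giving
\[
\sigma\Phi(r)=\bigl(2a_{n-1}+c_{n-1},\,-c_0,\dots,-c_{n-2},\,-c_{n-1},\,2a_0+c_0,\dots,2a_{n-2}+c_{n-2}\bigr).
\]
For the left-hand side, the only step that is not pure index bookkeeping is the reduction of the leading coordinate $(1-2u^2)r_{n-1}$ of $\varrho(r)$. Using the relation $u^3=u$ (whence $u^4=u^2$) one has $(1-2u^2)u=-u$ and $(1-2u^2)u^2=-u^2$, so $(1-2u^2)r_{n-1}=a_{n-1}-b_{n-1}u+(-2a_{n-1}-c_{n-1})u^2$. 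Its image under $\Phi$ is therefore $(2a_{n-1}+c_{n-1},\,-c_{n-1})$, because $-(-2a_{n-1}-c_{n-1})=2a_{n-1}+c_{n-1}$ and $2a_{n-1}+(-2a_{n-1}-c_{n-1})=-c_{n-1}$; meanwhile the remaining coordinates of $\varrho(r)$ are just $r_0,\dots,r_{n-2}$, carrying their unchanged $\Phi$-data.

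Assembling $\Phi\varrho(r)$ block by block --- the negated $u^2$-coefficients first, then the $2a+c$ entries --- I would obtain exactly the tuple displayed above for $\sigma\Phi(r)$, and since $r$ was arbitrary this proves $\Phi\varrho=\sigma\Phi$. I expect the main (and only mild) obstacle to be keeping the two length-$n$ blocks and their shifted indices aligned correctly, together with the $u^3=u$ reduction of $(1-2u^2)u$ and $(1-2u^2)u^2$. It is worth noting in passing that $\Phi$ is independent of the coefficients $b_i$, so the sign change $b_{n-1}\mapsto-b_{n-1}$ in the leading coordinate is invisible, and that $2$ is a unit in $\mathbb{F}_p$ (as is tacit throughout the paper), which underlies the consistency of the ``$2a+c$'' data.
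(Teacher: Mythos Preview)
Your proposal is correct and follows essentially the same approach as the paper's own proof: a direct elementwise computation of $\Phi\varrho(r)$ and $\sigma\Phi(r)$ after expanding $(1-2u^2)r_{n-1}=a_{n-1}-b_{n-1}u+(-2a_{n-1}-c_{n-1})u^2$, followed by comparison of the two $2n$-tuples. Your added remarks (the explicit use of $u^3=u$, the irrelevance of the $b_i$'s to $\Phi$, and the invertibility of $2$) are helpful clarifications not spelled out in the paper, but the argument is the same.
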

\begin{proof}
Let $\bar{r}=(r_0,r_1,\dots,r_{n-1})\in\mathcal{R}^n$ where $r_i=a_i+b_iu+c_iu^2$ with $a_i,b_i,c_i\in\mathbb{F}_p$
for $0\leq i\leq n-1$. Taking $(1-2u^2)$-constacyclic shift on $\bar{r}$, we have 

$$\hspace{-4.6cm}\varrho(\bar{r})=\big((1-2u^2)r_{n-1},r_0,r_1,\dots,r_{n-2}\big)$$
$$=\big(a_{n-1}-b_{n-1}u+(-2a_{n-1}-c_{n-1})u^2,a_0+b_0u+c_0u^2,$$
$$a_1+b_1u+c_1u^2,\dots,a_{n-2}+b_{n-2}u+c_{n-2}u^2\big).$$

Now, using the definition of Gray map $\Phi$, we can deduce that

$$\hspace{0cm}\Phi(\varrho(\bar{r}))=\big(2a_{n-1}+c_{n-1},-c_0,-c_1,\dots,-c_{n-2},2a_{n-1}+(-2a_{n-1}-c_{n-1}),$$
$$\hspace{-6mm}2a_0+c_0,2a_1+c_1,\dots,2a_{n-2}+c_{n-2}\big).$$
On the other hand,
\begin{eqnarray*}
\sigma(\Phi(\bar{r}))&=&\sigma(-c_0,-c_1,\dots,-c_{n-1},2a_0+c_0,2a_1+c_1,\dots,2a_{n-1}+c_{n-1})\\
&=&\big(2a_{n-1}+c_{n-1},-c_0,-c_1,\dots,-c_{n-1},2a_0+c_0,2a_1+c_1,\dots,2a_{n-2}+c_{n-2}\big).
\end{eqnarray*}
Therefore,
$$\Phi\varrho=\sigma\Phi.$$
\end{proof}
\begin{theorem}\label{T3}
The Gray image of a $(1-2u^2)$-constacyclic code over $\mathcal{R}$ of lenght $n$ is a 
cyclic code over $\mathbb{F}_p$ of lenght $2n$.
\end{theorem}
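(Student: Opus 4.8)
The plan is to derive this as an immediate consequence of Theorem~\ref{T2}. First I would record that the Gray map $\Phi$ is $\mathbb{F}_p$-linear: the assignment $\Phi(a+bu+cu^2)=(-c,\,2a+c)$ is built from $\mathbb{F}_p$-linear functions of the coordinates $a,b,c$, so the coordinatewise extension $\Phi:\mathcal{R}^n\to\mathbb{F}_p^{2n}$ is $\mathbb{F}_p$-linear as well. Consequently, whenever $\mathcal{C}$ is a linear code of length $n$ over $\mathcal{R}$, its image $\Phi(\mathcal{C})$ is an $\mathbb{F}_p$-subspace of $\mathbb{F}_p^{2n}$, hence a linear code of length $2n$ over $\mathbb{F}_p$. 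This already disposes of the ``linear code of length $2n$'' part of the statement.

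Now suppose in addition that $\mathcal{C}$ is $(1-2u^2)$-constacyclic, so that $\varrho(\mathcal{C})=\mathcal{C}$ by definition. Applying $\Phi$ to this equality of subsets of $\mathcal{R}^n$ gives $\Phi(\varrho(\mathcal{C}))=\Phi(\mathcal{C})$. On the other hand, Theorem~\ref{T2} yields the intertwining identity $\Phi\varrho=\sigma\Phi$, where $\sigma$ denotes the cyclic shift of $\mathbb{F}_p^{2n}$; evaluating this operator identity on the elements of $\mathcal{C}$ shows $\Phi(\varrho(\mathcal{C}))=\sigma(\Phi(\mathcal{C}))$. Combining the two displays gives $\sigma(\Phi(\mathcal{C}))=\Phi(\mathcal{C})$, which is precisely the assertion that the linear code $\Phi(\mathcal{C})$ of length $2n$ is invariant under the cyclic shift, i.e., that it is a cyclic code over $\mathbb{F}_p$. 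This finishes the proof.

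I do not foresee any genuine obstacle. The explicit shift computation has already been carried out in Theorem~\ref{T2}, and the remaining content is the routine observation that $\Phi$ preserves $\mathbb{F}_p$-linearity (so that the Gray image is a bona fide linear code) together with the trivial fact that a map sends a set invariant under $\varrho$ to a set invariant under the conjugated map $\sigma$. In short, Theorem~\ref{T3} is essentially a reformulation of Theorem~\ref{T2}.
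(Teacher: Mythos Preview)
Your argument is correct and matches the paper's own proof essentially line for line: both apply $\Phi$ to the identity $\varrho(\mathcal{C})=\mathcal{C}$ and then invoke Theorem~\ref{T2} to replace $\Phi\varrho$ by $\sigma\Phi$. Your added remark that $\Phi$ is $\mathbb{F}_p$-linear is a harmless extra detail the paper omits.
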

\begin{proof}
Let $\mathcal{C}$ be a $(1-2u^2)$-constacyclic code over $\mathcal{R}$. Then $\varrho(\mathcal{C})=\mathcal{C}$, and therefore, $(\Phi\varrho)(\mathcal{C})=\Phi(\mathcal{C})$. It follows
from Theorem \ref{T2} that $\sigma(\Phi(\mathcal{C}))=\Phi(\mathcal{C})$, which means that $\Phi(\mathcal{C})$ is a cyclic code.
\end{proof}


Notice that $(1-2u^2)^n=1-2u^2$ if $n$ is odd and $(1-2u^2)^n=1$ if $n$ is even.
\begin{proposition}
Let $\mathcal{C}$ be a code of lenght $n$ over $\mathcal{R}$.
Then $\mathcal{C}$ is a $(1-2u^2)$-constacyclic code if and only if  $\mathcal{C}^{\bot}$  is a $(1-2u^2)$-constacyclic code.
\end{proposition}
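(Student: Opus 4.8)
The plan is to exploit that $\lambda:=1-2u^2$ is a unit of $\mathcal{R}$ that equals its own inverse. Since $u^3=u$ gives $u^4=u^2$, we have $\lambda^2=1-4u^2+4u^4=1-4u^2+4u^2=1$; this is exactly the $n=2$ instance of the remark recorded just before the statement. Hence the $(1-2u^2)$-constacyclic shift $\varrho$ is a bijection of $\mathcal{R}^n$ with $\varrho^{-1}(y_0,\dots,y_{n-1})=(y_1,\dots,y_{n-1},\lambda y_0)$. The engine of the proof is the adjointness identity
$$\varrho(\bar x)\cdot\bar y=\bar x\cdot\varrho^{-1}(\bar y)\qquad\text{for all }\bar x,\bar y\in\mathcal{R}^n,$$
which follows by writing out both sides: each one is equal to $\lambda x_{n-1}y_0+\sum_{i=0}^{n-2}x_iy_{i+1}$.

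Now suppose $\mathcal{C}$ is $(1-2u^2)$-constacyclic, so $\varrho(\mathcal{C})=\mathcal{C}$, and, $\varrho$ being bijective, also $\varrho^{-1}(\mathcal{C})=\mathcal{C}$. Let $\bar x\in\mathcal{C}^{\bot}$. For every $\bar y\in\mathcal{C}$ we have $\varrho^{-1}(\bar y)\in\mathcal{C}$, whence $\varrho(\bar x)\cdot\bar y=\bar x\cdot\varrho^{-1}(\bar y)=0$ by the identity above. Since $\bar y$ was an arbitrary codeword of $\mathcal{C}$, this says $\varrho(\bar x)\in\mathcal{C}^{\bot}$, i.e. $\varrho(\mathcal{C}^{\bot})\subseteq\mathcal{C}^{\bot}$; and since $\mathcal{R}^n$ is finite and $\varrho$ is injective, $|\varrho(\mathcal{C}^{\bot})|=|\mathcal{C}^{\bot}|$, so $\varrho(\mathcal{C}^{\bot})=\mathcal{C}^{\bot}$. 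Thus $\mathcal{C}^{\bot}$ is $(1-2u^2)$-constacyclic.

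For the converse I would apply the implication just proved with $\mathcal{C}^{\bot}$ in place of $\mathcal{C}$ and then invoke $(\mathcal{C}^{\bot})^{\bot}=\mathcal{C}$. This last equality I would justify via the idempotent decomposition of the introduction: writing $\mathcal{C}=\eta_1\mathcal{C}_1\oplus\eta_2\mathcal{C}_2\oplus\eta_3\mathcal{C}_3$ and using that the $\eta_i$ are mutually orthogonal idempotents with $\eta_1+\eta_2+\eta_3=1$, one checks $\mathcal{C}^{\bot}=\eta_1\mathcal{C}_1^{\bot}\oplus\eta_2\mathcal{C}_2^{\bot}\oplus\eta_3\mathcal{C}_3^{\bot}$, so that $(\mathcal{C}^{\bot})^{\bot}=\eta_1(\mathcal{C}_1^{\bot})^{\bot}\oplus\eta_2(\mathcal{C}_2^{\bot})^{\bot}\oplus\eta_3(\mathcal{C}_3^{\bot})^{\bot}=\mathcal{C}$ because double duals over $\mathbb{F}_p$ are trivial. (Equivalently, $\mathcal{R}\cong\mathbb{F}_p^{3}$ is a finite Frobenius ring, so $|\mathcal{C}|\cdot|\mathcal{C}^{\bot}|=|\mathcal{R}|^{n}$ and $(\mathcal{C}^{\bot})^{\bot}=\mathcal{C}$.) I do not expect a genuine obstacle; the only points that truly need care are the verification $\lambda^{-1}=\lambda$ — without it one would conclude only that $\mathcal{C}^{\bot}$ is $\lambda^{-1}$-constacyclic — and the justification of the double-dual identity over $\mathcal{R}$.
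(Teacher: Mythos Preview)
Your proof is correct. For the forward implication the paper simply invokes \cite[Proposition~2.4]{D3} (the general fact that the dual of a $\lambda$-constacyclic code is $\lambda^{-1}$-constacyclic), whereas you supply a self-contained argument via the adjointness identity $\varrho(\bar x)\cdot\bar y=\bar x\cdot\varrho^{-1}(\bar y)$. Your route is more elementary and has the virtue of making transparent why the self-inverse property $\lambda^{-1}=\lambda$ is the crux: without it one would only get that $\mathcal{C}^{\bot}$ is $\lambda^{-1}$-constacyclic, exactly as you note. The converse is handled identically in both proofs---apply the forward direction to $\mathcal{C}^{\bot}$ and use $(\mathcal{C}^{\bot})^{\bot}=\mathcal{C}$; the paper states this last equality as a bare fact, while your justification through the idempotent decomposition (or the Frobenius property of $\mathcal{R}\cong\mathbb{F}_p^{3}$) fills that in.
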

\begin{proof}
The ``only if'' part follows from Proposition 2.4 of \cite{D3}. For the converse note the fact that $(\mathcal{C}^{\bot})^{\bot}=\mathcal{C}$.
\end{proof}

Recall that a code $\mathcal{C}$ is said to be self-orthogonal provided $\mathcal{C}\subseteq\mathcal{C}^\bot$.
\begin{proposition}
Let $\mathcal{C}$ be a code of length $n$ over $\mathcal{R}$ such that $\mathcal{C}\subset\big(\mathbb{F}_p+u^2\mathbb{F}_p\big)^n$. 
If $\mathcal{C}$ is self-orthogonal, then so is $\Phi(\mathcal{C})$.
\end{proposition}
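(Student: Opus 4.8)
The plan is to argue directly from the definition of the Euclidean inner product on $\mathcal{R}^n$ together with the explicit formula for $\Phi$. The idea is to translate the single $\mathcal{R}$-valued relation ``$\mathcal{C}\subseteq\mathcal{C}^\bot$'' into a pair of $\mathbb{F}_p$-valued identities, and then to recognize the $\mathbb{F}_p^{2n}$-inner product of the Gray images as an explicit $\mathbb{F}_p$-linear combination of exactly those two identities.

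First I would fix arbitrary codewords $\bar r=(r_0,\dots,r_{n-1})$ and $\bar s=(s_0,\dots,s_{n-1})$ of $\mathcal{C}$. Since $\mathcal{C}\subset(\mathbb{F}_p+u^2\mathbb{F}_p)^n$, I may write $r_i=a_i+c_iu^2$ and $s_i=a_i'+c_i'u^2$ with all of $a_i,a_i',c_i,c_i'$ in $\mathbb{F}_p$; it is precisely here that the hypothesis is used, since it forces the $u$-coordinates of the codewords to vanish. Using $u^3=u$ (hence $u^4=u^2$), a direct multiplication gives $r_is_i=a_ia_i'+(a_ic_i'+a_i'c_i+c_ic_i')u^2$. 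Summing over $i$ and invoking self-orthogonality yields $\bar r\cdot\bar s=0$ in $\mathcal{R}$; since $1$ and $u^2$ are linearly independent over $\mathbb{F}_p$, this is equivalent to the two scalar identities $\sum_i a_ia_i'=0$ and $\sum_i(a_ic_i'+a_i'c_i+c_ic_i')=0$ in $\mathbb{F}_p$.

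Next I would write out $\Phi(\bar r)$ and $\Phi(\bar s)$ from the definition of $\Phi$ and compute their Euclidean inner product in $\mathbb{F}_p^{2n}$: expanding $\sum_i(-c_i)(-c_i')+\sum_i(2a_i+c_i)(2a_i'+c_i')$ and collecting terms gives $4\sum_i a_ia_i'+2\sum_i(a_ic_i'+a_i'c_i+c_ic_i')$, which vanishes by the two identities of the previous step. Taking $\bar r=\bar s$ shows each Gray image is isotropic, and letting $\bar r,\bar s$ range over $\mathcal{C}$ shows $\Phi(\mathcal{C})\subseteq\Phi(\mathcal{C})^\bot$, i.e.\ $\Phi(\mathcal{C})$ is self-orthogonal.

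The computation itself is elementary, so I do not expect a genuine obstacle; the only points needing care are the bookkeeping with $u^3=u$ in the product $r_is_i$ and the observation that the $\mathbb{F}_p^{2n}$-pairing of Gray images is exactly the combination $4\,(\text{first component})+2\,(\text{second component})$ of the $\mathcal{R}$-pairing. It is worth stressing why the restriction $\mathcal{C}\subset(\mathbb{F}_p+u^2\mathbb{F}_p)^n$ cannot be dropped: for a general $r_i=a_i+b_iu+c_iu^2$ the coefficient of $u^2$ in $\bar r\cdot\bar s$ would also contain a term $\sum_i b_ib_i'$, which the Gray map does not see and which therefore could not be cancelled in the above identity.
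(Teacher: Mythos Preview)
Your proposal is correct and follows essentially the same route as the paper: both compute the $\mathcal{R}$-inner product of two codewords in $(\mathbb{F}_p+u^2\mathbb{F}_p)^n$, read off the two $\mathbb{F}_p$-identities from the vanishing of the $1$- and $u^2$-components, and then verify that the $\mathbb{F}_p^{2n}$-pairing of the Gray images equals $4$ times the first identity plus $2$ times the second. Your write-up is a bit more explicit (noting the linear independence of $1,u^2$ and explaining why the $u$-free hypothesis is needed), but the argument is the same.
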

\begin{proof}
Assume that $\mathcal{C}$ is self-orthogonal. Let $r_1=a_1+c_1u^2,~r_2=a_2+c_2u^2\in\mathcal{C}$, where $a_i,c_i\in\mathbb{F}_p^n$ for $i=1,2$.
Now by Euclidean inner product of $r_1$ and $r_2$, we have
\begin{eqnarray*}
 r_1\cdot r_2&=&(a_1+c_1u^2)\cdot(a_2+c_2u^2)\\
 &=&a_1a_2+(a_1c_2+c_1a_2+c_1c_2)u^2.
\end{eqnarray*}
If $r_1\cdot r_2=0$, then $a_1a_2=a_1c_2+c_1a_2+c_1c_2=0$. 
Therefore 
\begin{eqnarray*}
 \Phi(r_1)\cdot\Phi(r_2)&=&(-c_1,2a_1+c_1)\cdot(-c_2,2a_2+c_2)\\
 &=&4a_1a_2+2(c_1c_2+a_1c_2+c_1a_2)=0.
\end{eqnarray*}
Hence $\Phi(\mathcal{C}^{\bot})\subseteq\Phi(\mathcal{C})^{\bot}$. Consequently $\Phi(\mathcal{C})\subseteq\Phi(\mathcal{C})^{\bot}$.
\end{proof}

\begin{theorem}\label{T5}
Let $\mathcal{C}=\eta_1\mathcal{C}_1\oplus\eta_2\mathcal{C}_2\oplus\eta_3\mathcal{C}_3$ be a code of length $n$ over $\mathcal{R}$. 
Then $\mathcal{C}$ is a $(1-2u^2)$-constacyclic code of length $n$ over $\mathcal{R}$ if and only if $\mathcal{C}_1$ is cyclic and 
$\mathcal{C}_2$, $\mathcal{C}_3$ are negacyclic codes of length $n$ over $\mathbb{F}_p$.
\end{theorem}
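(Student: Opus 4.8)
The plan is to reduce everything, via the orthogonal idempotent decomposition, to the action of the constant $1-2u^2$ on each summand. The one genuinely computational step — and the crux of the proof — is to evaluate $(1-2u^2)\eta_i$ for $i=1,2,3$. Using $u^3=u$ (hence $u^4=u^2$) one finds
\[
(1-2u^2)\eta_1=\eta_1,\qquad (1-2u^2)\eta_2=-\eta_2,\qquad (1-2u^2)\eta_3=-\eta_3;
\]
for instance $(1-2u^2)(1-u^2)=1-3u^2+2u^4=1-u^2=\eta_1$, and the checks for $\eta_2=2^{-1}(u+u^2)$ and $\eta_3=2^{-1}(-u+u^2)$ are entirely analogous. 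The sign change on $\eta_2$ and $\eta_3$ is precisely what will turn ``constacyclic'' into ``negacyclic'' on the second and third components, so these signs are the thing to get right.

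Next I would take an arbitrary element $c=(c_0,\dots,c_{n-1})$ of $\mathcal{C}=\eta_1\mathcal{C}_1\oplus\eta_2\mathcal{C}_2\oplus\eta_3\mathcal{C}_3$, write $c_i=\eta_1 x_i+\eta_2 y_i+\eta_3 z_i$ with $x=(x_i)\in\mathcal{C}_1$, $y=(y_i)\in\mathcal{C}_2$, $z=(z_i)\in\mathcal{C}_3$, and apply $\varrho$. Since $\varrho(c)=\big((1-2u^2)c_{n-1},c_0,\dots,c_{n-2}\big)$ while $(1-2u^2)c_{n-1}=\eta_1 x_{n-1}-\eta_2 y_{n-1}-\eta_3 z_{n-1}$ by the previous step, the shift splits across the three summands:
\[
\varrho(c)=\eta_1\,\sigma(x)+\eta_2\,\gamma(y)+\eta_3\,\gamma(z),
\]
where $\sigma$ and $\gamma$ now denote the cyclic and negacyclic shifts on $\mathbb{F}_p^n$. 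Hence $\varrho(\mathcal{C})=\eta_1\,\sigma(\mathcal{C}_1)\oplus\eta_2\,\gamma(\mathcal{C}_2)\oplus\eta_3\,\gamma(\mathcal{C}_3)$.

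Finally I would invoke uniqueness of the idempotent decomposition to conclude. Because $\eta_1,\eta_2,\eta_3$ are mutually orthogonal idempotents summing to $1$, and because $\eta_i v=0$ forces $v=0$ for $v\in\mathbb{F}_p^n$ (inspect the coefficient of $1$ for $\eta_1$, and of $u$ for $\eta_2,\eta_3$), the three $\mathbb{F}_p$-components of a code over $\mathcal{R}$ are uniquely determined. Comparing $\varrho(\mathcal{C})=\eta_1\,\sigma(\mathcal{C}_1)\oplus\eta_2\,\gamma(\mathcal{C}_2)\oplus\eta_3\,\gamma(\mathcal{C}_3)$ with $\mathcal{C}=\eta_1\mathcal{C}_1\oplus\eta_2\mathcal{C}_2\oplus\eta_3\mathcal{C}_3$ shows that $\varrho(\mathcal{C})=\mathcal{C}$ if and only if $\sigma(\mathcal{C}_1)=\mathcal{C}_1$, $\gamma(\mathcal{C}_2)=\mathcal{C}_2$ and $\gamma(\mathcal{C}_3)=\mathcal{C}_3$, i.e.\ $\mathcal{C}_1$ is cyclic and $\mathcal{C}_2,\mathcal{C}_3$ are negacyclic. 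I do not expect any real obstacle here: once the identities $(1-2u^2)\eta_2=-\eta_2$ and $(1-2u^2)\eta_3=-\eta_3$ are in hand, the rest is formal manipulation with the decomposition.
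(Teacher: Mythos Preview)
Your proof is correct and follows essentially the same approach as the paper: both begin with the key identities $(1-2u^2)\eta_1=\eta_1$, $(1-2u^2)\eta_2=-\eta_2$, $(1-2u^2)\eta_3=-\eta_3$, then decompose an element componentwise to obtain $\varrho(c)=\eta_1\sigma(x)+\eta_2\gamma(y)+\eta_3\gamma(z)$. The only cosmetic difference is that you package both implications into the single identity $\varrho(\mathcal{C})=\eta_1\sigma(\mathcal{C}_1)\oplus\eta_2\gamma(\mathcal{C}_2)\oplus\eta_3\gamma(\mathcal{C}_3)$ together with uniqueness of the idempotent decomposition, whereas the paper writes out the two directions separately.
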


\begin{proof}
First of all notice that $(1-2u^2)\eta_1=\eta_1$, $(1-2u^2)\eta_2=-\eta_2$ and $(1-2u^2)\eta_3=-\eta_3$. Let $\bar{r}=(r_0,r_1,\dots,r_{n-1})\in\mathcal{C}$.
Then $r_i=\eta_1a_i+\eta_2b_i+\eta_3c_i$, where $a_i,b_i,c_i\in\mathbb{F}_p$, $0\leq i\leq n-1$. Let $a=(a_0,a_1,\dots,a_{n-1})$,
$b=(b_0,b_1,\dots,b_{n-1})$ and $c=(c_0,c_1,\dots,c_{n-1})$. Then $a\in\mathcal{C}_1$, $b\in\mathcal{C}_2$ and $c\in\mathcal{C}_3$.
Assume that $\mathcal{C}_1$ is cyclic and $\mathcal{C}_2$, $\mathcal{C}_3$ are negacyclic codes. Therefore $\sigma(a)\in\mathcal{C}_1$,
$\gamma(b)\in\mathcal{C}_2$ and $\gamma(c)\in\mathcal{C}_3$. Thus $\varrho(\bar{r})=\eta_1\sigma(a)+\eta_2\gamma(b)+\eta_3\gamma(c)\in\mathcal{C}$.
Consequently $\mathcal{C}$ is a $(1-2u^2)$-constacyclic codes over $\mathcal{R}$. For the converse, let $a=(a_0,a_1,\dots,a_{n-1})\in\mathcal{C}_1$,
$b=(b_0,b_1,\dots,b_{n-1})\in\mathcal{C}_2$ and $c=(c_0,c_1,\dots,c_{n-1})\in\mathcal{C}_3$. Set $r_i=\eta_1a_i+\eta_2b_i+\eta_3c_i$, where $0\leq i\leq n-1$.
Hence $\bar{r}=(r_0,r_1,\dots,r_{n-1})\in\mathcal{C}$. Therefore $\varrho(\bar{r})=\eta_1\sigma(a)+\eta_2\gamma(b)+\eta_3\gamma(c)\in\mathcal{C}$
which shows that $\sigma(a)\in\mathcal{C}_1$, $\gamma(b)\in\mathcal{C}_2$ and $\gamma(c)\in\mathcal{C}_3$. So $\mathcal{C}_1$ is cyclic and 
$\mathcal{C}_2$, $\mathcal{C}_3$ are negacyclic codes.
\end{proof}

\begin{theorem}\label{T6}
Let $\mathcal{C}=\eta_1\mathcal{C}_1\oplus\eta_2\mathcal{C}_2\oplus\eta_3\mathcal{C}_3$ be a $(1-2u^2)$-constacyclic code of length $n$ over $\mathcal{R}$
such that $g_1(x),g_2(x),g_3(x)$ are the monic generator polynomials of $\mathcal{C}_1,\mathcal{C}_2,\mathcal{C}_2$, respectively. 
Then $\mathcal{C}=\langle\eta_1g_1(x),\eta_2g_2(x),\eta_3g_3(x)\rangle$ and $|\mathcal{C}|=p^{3n-\sum_{i=1}^3{\rm deg}(g_i)}$.
\end{theorem}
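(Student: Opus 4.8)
The plan is to pass to the quotient ring $\mathcal{R}[x]/\langle x^n-(1-2u^2)\rangle$ (so that, by the first theorem of Section~1, $\mathcal{C}$ is an ideal there) and to use the idempotent decomposition already exploited in Theorem~\ref{T5}. First I would record, exactly as in the proof of Theorem~\ref{T5}, that $(1-2u^2)\eta_1=\eta_1$, $(1-2u^2)\eta_2=-\eta_2$, $(1-2u^2)\eta_3=-\eta_3$, together with $\eta_1u=0$, $\eta_2u=\eta_2$, $\eta_3u=-\eta_3$, so that $\eta_i\mathcal{R}\cong\mathbb{F}_p$ for each $i$ and
\[
\eta_1\big(x^n-(1-2u^2)\big)=\eta_1(x^n-1),\qquad \eta_j\big(x^n-(1-2u^2)\big)=\eta_j(x^n+1)\quad(j=2,3).
\]
Since $\eta_1,\eta_2,\eta_3$ are mutually orthogonal idempotents summing to $1$, these identities give a ring isomorphism
\[
\mathcal{R}[x]/\langle x^n-(1-2u^2)\rangle\cong\eta_1\mathbb{F}_p[x]/\langle x^n-1\rangle\oplus\eta_2\mathbb{F}_p[x]/\langle x^n+1\rangle\oplus\eta_3\mathbb{F}_p[x]/\langle x^n+1\rangle,
\]
under which multiplication by $\eta_1$ corresponds to reduction modulo $x^n-1$ in the first coordinate, and multiplication by $\eta_2$ or $\eta_3$ to reduction modulo $x^n+1$ in the other two. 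Under the polynomial representation $P$, the code $\mathcal{C}=\eta_1\mathcal{C}_1\oplus\eta_2\mathcal{C}_2\oplus\eta_3\mathcal{C}_3$ then corresponds to $\eta_1P(\mathcal{C}_1)\oplus\eta_2P(\mathcal{C}_2)\oplus\eta_3P(\mathcal{C}_3)$, and by Theorem~\ref{T5} together with the classification of cyclic/negacyclic codes over $\mathbb{F}_p$ we have $P(\mathcal{C}_1)=\langle g_1(x)\rangle$ with $g_1(x)\mid x^n-1$ and $P(\mathcal{C}_j)=\langle g_j(x)\rangle$ with $g_j(x)\mid x^n+1$ for $j=2,3$.

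For the inclusion $\langle\eta_1g_1(x),\eta_2g_2(x),\eta_3g_3(x)\rangle\subseteq\mathcal{C}$, I observe that the coefficient vector of $g_i(x)$ lies in $\mathcal{C}_i$, so $\eta_ig_i(x)$ is the image of the triple having $g_i$ in the $i$-th slot and $0$ elsewhere; hence $\eta_ig_i(x)\in\eta_1\mathcal{C}_1\oplus\eta_2\mathcal{C}_2\oplus\eta_3\mathcal{C}_3=\mathcal{C}$, and since $\mathcal{C}$ is an ideal the ideal generated by these three elements lies in $\mathcal{C}$. For the reverse inclusion, take $\bar r\in\mathcal{C}$ and write $\bar r=\eta_1a+\eta_2b+\eta_3d$ with $a\in\mathcal{C}_1$, $b\in\mathcal{C}_2$, $d\in\mathcal{C}_3$. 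Writing $a\equiv f_1(x)g_1(x)\pmod{x^n-1}$, $b\equiv f_2(x)g_2(x)\pmod{x^n+1}$, $d\equiv f_3(x)g_3(x)\pmod{x^n+1}$ and using the coordinatewise description above, one gets $\eta_1a=(\eta_1g_1(x))f_1(x)$, $\eta_2b=(\eta_2g_2(x))f_2(x)$, $\eta_3d=(\eta_3g_3(x))f_3(x)$ in $\mathcal{R}[x]/\langle x^n-(1-2u^2)\rangle$, so $\bar r\in\langle\eta_1g_1(x),\eta_2g_2(x),\eta_3g_3(x)\rangle$. This establishes the stated equality.

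For the cardinality, the assignment $(a,b,d)\mapsto\eta_1a+\eta_2b+\eta_3d$ is a bijection from $\mathcal{C}_1\times\mathcal{C}_2\times\mathcal{C}_3$ onto $\mathcal{C}$ — this is precisely the uniqueness in the decomposition recalled in Section~1 — so $|\mathcal{C}|=|\mathcal{C}_1|\,|\mathcal{C}_2|\,|\mathcal{C}_3|$. A cyclic or negacyclic code of length $n$ over $\mathbb{F}_p$ with monic generator polynomial of degree $d$ is a free $\mathbb{F}_p$-module of dimension $n-d$, hence $|\mathcal{C}_i|=p^{\,n-\deg g_i}$ for $i=1,2,3$, and therefore $|\mathcal{C}|=p^{\,3n-\sum_{i=1}^3\deg(g_i)}$. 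The one genuinely delicate point is making the ring isomorphism of the first paragraph precise — i.e.\ verifying that multiplication by $\eta_i$ intertwines correctly with reduction modulo $x^n-1$ (for $i=1$) and modulo $x^n+1$ (for $i=2,3$); once that is nailed down, both the ideal equality and the counting are essentially formal.
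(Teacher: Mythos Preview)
Your proof is correct and follows essentially the same route as the paper: both invoke Theorem~\ref{T5} to identify $\mathcal{C}_1,\mathcal{C}_2,\mathcal{C}_3$ as cyclic/negacyclic codes, use the idempotent decomposition to match elements of $\mathcal{C}$ with the ideal $\langle\eta_1g_1,\eta_2g_2,\eta_3g_3\rangle$ via the two inclusions, and count by $|\mathcal{C}|=|\mathcal{C}_1||\mathcal{C}_2||\mathcal{C}_3|$. Your version is slightly more explicit in setting up the ring isomorphism $\mathcal{R}[x]/\langle x^n-(1-2u^2)\rangle\cong\bigoplus_i\eta_i\mathbb{F}_p[x]/\langle x^n\mp1\rangle$ upfront, whereas the paper leaves this implicit, but the substance is the same.
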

\begin{proof}
By Theorem \ref{T5}, $\mathcal{C}_1=\langle g_1(x)\rangle\subseteq\mathbb{F}_p[x]/\langle x^n-1\rangle,\mathcal{C}_2=\langle g_2(x)\rangle\subseteq\mathbb{F}_p[x]/\langle x^n+1\rangle$ and $\mathcal{C}_3=\langle g_3(x)\rangle\subseteq\mathbb{F}_p[x]/\langle x^n+1\rangle$.
Since $\mathcal{C}=\eta_1\mathcal{C}_1\oplus\eta_2\mathcal{C}_2\oplus\eta_3\mathcal{C}_3$, then 
$$\mathcal{C}=\{c(x)|c(x)=\eta_1f_1(x)+\eta_2f_2(x)+\eta_3f_3(x),~ f_1(x)\in\mathcal{C}_1,f_2(x)\in\mathcal{C}_2 \mbox{ and } f_3(x)\in\mathcal{C}_3\}.$$
Hence
$$\mathcal{C}\subseteq\langle\eta_1g_1(x),\eta_2g_2(x),\eta_3g_3(x)\rangle\subseteq\mathcal{R}_n=\mathcal{R}[x]/\langle x^n-(1-2u^2)\rangle.$$
Suppose that $\eta_1g_1(x)h_1(x)+\eta_2g_2(x)h_2(x)+\eta_3g_3(x)h_3(x)\in\langle\eta_1g_1(x),\eta_2g_2(x),\eta_3g_3(x)\rangle$, where $h_1(x),h_2(x),h_3(x)\in\mathcal{R}_n$. There exist $q_1(x)\in\mathbb{F}_p[x]/\langle x^n-1\rangle,q_2(x)\in\mathbb{F}_p[x]/\langle x^n+1\rangle$ and $q_3(x)\in\mathbb{F}_p[x]/\langle x^n+1\rangle$ such that $\eta_1h_1(x)=\eta_1q_1(x)$, $\eta_2h_2(x)=\eta_2q_2(x)$ and $\eta_3h_3(x)=\eta_3q_3(x)$.
Therefore $\langle\eta_1g_1(x),\eta_2g_2(x),\eta_3g_3(x)\rangle\subseteq\mathcal{C}$. Cosequently $\mathcal{C}=\langle\eta_1g_1(x),\eta_2g_2(x),\eta_3g_3(x)\rangle$. On the other hand $|\mathcal{C}|=|\mathcal{C}_1|\cdot|\mathcal{C}_2|\cdot|\mathcal{C}_3|=p^{3n-\sum_{i=1}^3{\rm deg}(g_i)}$.
\end{proof}

\begin{theorem}
Let $\mathcal{C}$ be a $(1-2u^2)$-constacyclic code of length $n$ over $\mathcal{R}$.
Then there exists a unique polynomial $g(x)$ such that $\mathcal{C}=\langle g(x)\rangle$ 
where $g(x)=\eta_1g_1(x)+\eta_2g_2(x)+\eta_3g_3(x)$.
\end{theorem}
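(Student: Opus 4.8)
The plan is to take for $g(x)$ the natural candidate assembled from the component generators and then to convert between the single generator and the three generators of Theorem \ref{T6} using the orthogonality relations $\eta_i\eta_j=0$ (for $i\ne j$), $\eta_i^2=\eta_i$, and $\eta_1+\eta_2+\eta_3=1$. By Theorem \ref{T5} the codes $\mathcal{C}_1$ (cyclic) and $\mathcal{C}_2,\mathcal{C}_3$ (negacyclic) each possess a unique monic generator polynomial, say $g_1(x)\in\mathbb{F}_p[x]/\langle x^n-1\rangle$ and $g_2(x),g_3(x)\in\mathbb{F}_p[x]/\langle x^n+1\rangle$; set $g(x)=\eta_1g_1(x)+\eta_2g_2(x)+\eta_3g_3(x)$, regarded as an element of $\mathcal{R}_n=\mathcal{R}[x]/\langle x^n-(1-2u^2)\rangle$.

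For existence I would prove $\mathcal{C}=\langle g(x)\rangle$ by two inclusions. Since $g_i(x)\in\mathcal{C}_i$, the element $g(x)$ belongs to $\eta_1\mathcal{C}_1\oplus\eta_2\mathcal{C}_2\oplus\eta_3\mathcal{C}_3=\mathcal{C}$, whence $\langle g(x)\rangle\subseteq\mathcal{C}$. Conversely, multiplying $g(x)$ by $\eta_i$ annihilates the two summands indexed by $j\ne i$ and leaves $\eta_ig(x)=\eta_ig_i(x)$; thus each of $\eta_1g_1(x)$, $\eta_2g_2(x)$, $\eta_3g_3(x)$ is an $\mathcal{R}_n$-multiple of $g(x)$, so $\langle\eta_1g_1(x),\eta_2g_2(x),\eta_3g_3(x)\rangle\subseteq\langle g(x)\rangle$. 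By Theorem \ref{T6} the left side is exactly $\mathcal{C}$, giving $\mathcal{C}\subseteq\langle g(x)\rangle$ and hence $\mathcal{C}=\langle g(x)\rangle$.

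For uniqueness I would show that $g$ is forced by $\mathcal{C}$. The decomposition $\mathcal{C}=\eta_1\mathcal{C}_1\oplus\eta_2\mathcal{C}_2\oplus\eta_3\mathcal{C}_3$ is unique, so $\mathcal{C}_1,\mathcal{C}_2,\mathcal{C}_3$ are determined by $\mathcal{C}$, and each $g_i$ is the unique monic generator polynomial of $\mathcal{C}_i$, so $g$ is determined. To make this airtight in the stated form, suppose $\mathcal{C}=\langle h(x)\rangle$ with $h(x)=\eta_1h_1(x)+\eta_2h_2(x)+\eta_3h_3(x)$ and $h_1,h_2,h_3$ monic of the appropriate types; using $(1-2u^2)\eta_1=\eta_1$ and $(1-2u^2)\eta_2=-\eta_2$, $(1-2u^2)\eta_3=-\eta_3$, the idempotents induce a ring isomorphism $\mathcal{R}_n\cong\mathbb{F}_p[x]/\langle x^n-1\rangle\times\mathbb{F}_p[x]/\langle x^n+1\rangle\times\mathbb{F}_p[x]/\langle x^n+1\rangle$ under which $\mathcal{C}$ corresponds to $\langle h_1(x)\rangle\times\langle h_2(x)\rangle\times\langle h_3(x)\rangle=\mathcal{C}_1\times\mathcal{C}_2\times\mathcal{C}_3$, so $\langle h_i(x)\rangle=\mathcal{C}_i=\langle g_i(x)\rangle$ and monicity forces $h_i=g_i$, i.e. $h=g$.

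The step I expect to be the main obstacle is this last identification: one must carefully justify the isomorphism $\eta_i\mathcal{R}_n\cong\mathbb{F}_p[x]/\langle x^n-1\rangle$ for $i=1$ and $\eta_i\mathcal{R}_n\cong\mathbb{F}_p[x]/\langle x^n+1\rangle$ for $i=2,3$ --- the point being that after multiplying by $\eta_i$ the relation $x^n-(1-2u^2)$ collapses to $x^n-1$ or $x^n+1$ --- and then quote uniqueness of the monic generator in those field quotients. Everything else reduces to the orthogonality of the $\eta_i$ together with Theorems \ref{T5} and \ref{T6}.
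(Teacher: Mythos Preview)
Your proof is correct and follows essentially the same route as the paper: both arguments set $g(x)=\eta_1g_1(x)+\eta_2g_2(x)+\eta_3g_3(x)$, obtain $\langle g(x)\rangle\subseteq\mathcal{C}$ from $g(x)\in\mathcal{C}$, and get the reverse inclusion from $\eta_ig(x)=\eta_ig_i(x)$ together with Theorem~\ref{T6}. For uniqueness the paper simply invokes the uniqueness of the monic generators $g_1,g_2,g_3$, whereas you supply the extra detail of the componentwise isomorphism---this is a welcome elaboration rather than a different approach.
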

\begin{proof}
Suppose that $g_1(x),g_2(x)$ and $g_3(x)$ are the monic generator polynomials of $\mathcal{C}_1,\mathcal{C}_2$ and $\mathcal{C}_3$, respectively.
By Theorem \ref{T6}, we have $\mathcal{C}=\langle\eta_1g_1(x),\eta_2g_2(x),\eta_3g_3(x)\rangle$.
Let $g(x)=\eta_1g_1(x)+\eta_2g_2(x)+\eta_3g_3(x)$. Clearly, $\langle g(x)\rangle\subseteq\mathcal{C}$. On the other hand $\eta_1g_1(x)=\eta_1g (x),\eta_2g_2(x)=\eta_2g (x)$ and $\eta_3g_3(x) =\eta_3g(x)$, whence $\mathcal{C}\subseteq\langle g(x)\rangle$. Thus $\mathcal{C}=\langle g (x)\rangle$. 
The uniqueness of $g(x)$ is followed by that of $g_1(x),g_2(x)$ and $g_3(x)$.
\end{proof}

\begin{lemma}\label{L9}
Let $x^n-(1-2u^2)=g(x)h(x)$ in $\mathcal{R}[x]$ and let $\mathcal{C}$ be the $(1-2u^2)$-constacyclic code generated
by $g(x)$. If $f(x)$ is relatively prime with $h(x)$ then $\mathcal{C}=\langle g(x)f(x)\rangle$.
\end{lemma}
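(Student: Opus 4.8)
The plan is to establish the equality of ideals $\mathcal{C}=\langle g(x)\rangle=\langle g(x)f(x)\rangle$ inside the ambient ring $\mathcal{R}_n=\mathcal{R}[x]/\langle x^n-(1-2u^2)\rangle$ by proving the two inclusions separately. One inclusion is immediate: $g(x)f(x)$ is an $\mathcal{R}_n$-multiple of $g(x)$, hence $g(x)f(x)\in\mathcal{C}$, and therefore the ideal it generates satisfies $\langle g(x)f(x)\rangle\subseteq\langle g(x)\rangle=\mathcal{C}$.

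For the reverse inclusion the key step is to exhibit $g(x)$ itself as an $\mathcal{R}_n$-multiple of $g(x)f(x)$. I would use the hypothesis that $f(x)$ is relatively prime with $h(x)$ to obtain a B\'ezout-type identity $a(x)f(x)+b(x)h(x)=1$ with $a(x),b(x)\in\mathcal{R}[x]$. Multiplying by $g(x)$ gives $g(x)=a(x)f(x)g(x)+b(x)g(x)h(x)$ in $\mathcal{R}[x]$. Reducing modulo $x^n-(1-2u^2)$ and using $g(x)h(x)=x^n-(1-2u^2)\equiv 0$ in $\mathcal{R}_n$, the second summand disappears, so $g(x)\equiv a(x)\bigl(g(x)f(x)\bigr)$ in $\mathcal{R}_n$. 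Thus $g(x)\in\langle g(x)f(x)\rangle$, whence $\mathcal{C}=\langle g(x)\rangle\subseteq\langle g(x)f(x)\rangle$, and combining with the first inclusion yields $\mathcal{C}=\langle g(x)f(x)\rangle$.

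The only delicate point is what ``relatively prime'' should mean over the non-field coefficient ring $\mathcal{R}$. If it is read as the existence of a B\'ezout identity in $\mathcal{R}[x]$, the argument above is already complete. Otherwise I would pass through the idempotent decomposition $\mathcal{R}\cong\mathbb{F}_p\times\mathbb{F}_p\times\mathbb{F}_p$ afforded by $\eta_1,\eta_2,\eta_3$, so that $\mathcal{R}[x]\cong\mathbb{F}_p[x]\times\mathbb{F}_p[x]\times\mathbb{F}_p[x]$, apply the ordinary B\'ezout identity for coprime polynomials in each copy of $\mathbb{F}_p[x]$, and recombine the three identities via the $\eta_i$. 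Either way no real obstacle appears: the entire weight of the proof rests on the single substitution $g(x)h(x)\equiv 0$ in $\mathcal{R}_n$, and the B\'ezout identity is what makes that substitution usable.
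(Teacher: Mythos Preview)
Your argument is correct and is precisely the standard B\'ezout-identity proof that the paper has in mind when it writes ``similar to [Lemma 2, AKY]'' in lieu of an explicit proof. Your remark on interpreting coprimality over $\mathcal{R}$ via the idempotent decomposition is also apt and consistent with how the paper handles such identities elsewhere (see Theorem~\ref{T7}(2)).
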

\begin{proof}
The proof is similar to that of \cite[Lemma 2]{AKY}.
\end{proof}

\begin{theorem}\label{T7}
Let $\mathcal{C}=\eta_1\mathcal{C}_1\oplus\eta_2\mathcal{C}_2\oplus\eta_3\mathcal{C}_3$ be a $(1-2u^2)$-constacyclic code of length $n$ over $\mathcal{R}$
such that $g_1(x),g_2(x),g_3(x)$ are the monic generator polynomials of $\mathcal{C}_1,\mathcal{C}_2,\mathcal{C}_2$, respectively. 
Suppose that $g_1(x)h_1(x)=x^n-1$ and $g_2(x)h_2(x)=g_3(x)h_3(x)=x^n+1$ and set $g(x)=\eta_1g_1(x)+\eta_2g_2(x)+\eta_3g_3(x)$,$h(x)=\eta_1h_1(x)+\eta_2h_2(x)+\eta_3h_3(x)$. Then
\begin{enumerate}
\item $g(x)h(x)=x^n-(1-2u^2)$.
\item If $GCD(f_i(x),h_i(x))=1$ for $1\leq i\leq3$, then
$GCD(f(x),h(x))=1$ and $g(x)=g(x)f(x)$ where $f(x)=\eta_1f_1(x)+\eta_2f_2(x)+\eta_3f_3(x)$.
\end{enumerate}
\end{theorem}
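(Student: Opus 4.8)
The plan is to push everything through the orthogonal idempotent decomposition, using the relations $\eta_i\eta_j=0$ for $i\neq j$, $\eta_i^2=\eta_i$ and $\eta_1+\eta_2+\eta_3=1$, together with the identities $(1-2u^2)\eta_1=\eta_1$, $(1-2u^2)\eta_2=-\eta_2$, $(1-2u^2)\eta_3=-\eta_3$ already recorded in the proof of Theorem~\ref{T5}; in particular $\eta_2+\eta_3=u^2$, so that $\eta_1-\eta_2-\eta_3=(1-u^2)-u^2=1-2u^2$.

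For part (1) I would simply expand $g(x)h(x)=\big(\eta_1g_1(x)+\eta_2g_2(x)+\eta_3g_3(x)\big)\big(\eta_1h_1(x)+\eta_2h_2(x)+\eta_3h_3(x)\big)$. By orthogonality every mixed term $\eta_i\eta_j$ with $i\neq j$ vanishes, leaving $\eta_1g_1h_1+\eta_2g_2h_2+\eta_3g_3h_3$. Substituting $g_1h_1=x^n-1$ and $g_2h_2=g_3h_3=x^n+1$ and regrouping gives $(\eta_1+\eta_2+\eta_3)x^n-(\eta_1-\eta_2-\eta_3)=x^n-(1-2u^2)$, which is the claim. (Equivalently, one first checks directly that $x^n-(1-2u^2)=\eta_1(x^n-1)+\eta_2(x^n+1)+\eta_3(x^n+1)$ and then multiplies out the two idempotent sums.)

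For part (2), from $GCD(f_i(x),h_i(x))=1$ in $\mathbb{F}_p[x]$ choose $a_i(x),b_i(x)$ with $a_if_i+b_ih_i=1$, set $a(x)=\eta_1a_1+\eta_2a_2+\eta_3a_3$ and $b(x)=\eta_1b_1+\eta_2b_2+\eta_3b_3$, and observe, again by orthogonality, that $a(x)f(x)+b(x)h(x)=\eta_1(a_1f_1+b_1h_1)+\eta_2(a_2f_2+b_2h_2)+\eta_3(a_3f_3+b_3h_3)=\eta_1+\eta_2+\eta_3=1$ in $\mathcal{R}[x]$. Hence $f(x)$ and $h(x)$ are relatively prime. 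Now invoke part (1) to factor $x^n-(1-2u^2)=g(x)h(x)$ and apply Lemma~\ref{L9} with this $f(x)$: since $f(x)$ is relatively prime with $h(x)$, we obtain $\mathcal{C}=\langle g(x)f(x)\rangle$, and combining with the earlier fact $\mathcal{C}=\langle g(x)\rangle$ yields $\langle g(x)\rangle=\langle g(x)f(x)\rangle$, the intended reading of the stated conclusion.

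I do not expect a genuine obstacle here; the only point requiring care is that $\mathcal{R}[x]$ is not a principal ideal domain (indeed $\mathcal{R}$ has zero divisors), so ``relatively prime'' has to be interpreted as ``$f(x)$ and $h(x)$ generate the unit ideal'', which is exactly the B\'ezout identity produced above and exactly what Lemma~\ref{L9} (patterned on \cite[Lemma~2]{AKY}) uses. One should also make sure, as in Theorem~\ref{T6}, that the elements $\eta_ih_i(x)$ and $\eta_if_i(x)$ are consistently interpreted in $\mathcal{R}_n=\mathcal{R}[x]/\langle x^n-(1-2u^2)\rangle$ through the reductions modulo $x^n-1$ and $x^n+1$.
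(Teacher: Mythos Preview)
Your proposal is correct and follows essentially the same route as the paper: both parts are handled by expanding along the orthogonal idempotents, with part~(2) producing the B\'ezout identity $a(x)f(x)+b(x)h(x)=1$ from the componentwise ones and then invoking part~(1) together with Lemma~\ref{L9}. The only cosmetic difference is the last line: the paper appeals to the uniqueness of $g(x)$ to conclude $g(x)=g(x)f(x)$, whereas you (more cautiously) record the conclusion as the ideal equality $\langle g(x)\rangle=\langle g(x)f(x)\rangle$.
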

\begin{proof}
(1) By assumptions we have that
\begin{eqnarray*}
g(x)h(x)&=&g(x)\big(\eta_1h_1(x)+\eta_2h_2(x)+\eta_3h_3(x)\big)\\
&=&\eta_1g_1(x)h_1(x)+\eta_2g_2(x)h_2(x)+\eta_3g_3(x)h_3(x)\\
&=&\eta_1(x^n-1)+\eta_2(x^n+1)+\eta_3(x^n+1)\\
&=&(\eta_1+\eta_2+\eta_3)x^n-(\eta_1-\eta_2-\eta_3)\\
&=&x^n-(1-2u^2).
\end{eqnarray*}
Hence, $g(x)h(x)=x^n-(1-2u^2)$.\\
(2) Suppose that $GCD(f_i(x),h_i(x))=1$ for $1\leq i\leq3$ and let $f(x)=\eta_1f_1(x)+\eta_2f_2(x)+\eta_3f_3(x)$. Then for every $1\leq i\leq 3$ there exist $a_i(x),b_i(x)\in\mathcal{R}[x]$ such that $a_i(x)f_i(x)+b_i(x)h_i(x)=1$. Set $a(x):=\eta_1a_1(x)+\eta_2a_2(x)+\eta_3a_3(x)$ and $b(x):=\eta_1b_1(x)+\eta_2b_2(x)+\eta_3b_3(x)$. Notice that $\eta_1+\eta_2+\eta_3=1$,\ $\eta_i^2=1$ and $\eta_i\eta_j=0$
for every $1\leq i\neq j\leq3$.
Thus
\begin{eqnarray*}
a(x)f(x)+b(x)h(x)&=&\eta_1[a_1(x)f_1(x)+b_1(x)h_1(x)]+\eta_2[a_2(x)f_2(x)+b_2(x)h_2(x)]\\
&+&\eta_3[a_3(x)f_3(x)+b_3(x)h_3(x)]=\eta_1+\eta_2+\eta_3=1.
\end{eqnarray*}  
It follows that $GCD(f(x),h(x))=1$. Now, by part (1) and Lemma \ref{L9}, $\mathcal{C}=\langle g(x)f(x)\rangle$. So, the uniqueness of $g(x)$ implies that
$g(x)=g(x)f(x)$.
\end{proof}

Similar to \cite[Theorem 3]{G}, we have the following theorem.
\begin{theorem}
Let $\mathcal{C}$ be a $(1-2u^2)$-constacyclic code of length $n$ over $\mathcal{R}$. Then
$$\mathcal{C}^\bot=\eta_1\mathcal{C}_1^\bot\oplus\eta_2\mathcal{C}_2^\bot\oplus\eta_3\mathcal{C}_3^\bot.$$
\end{theorem}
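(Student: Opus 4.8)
The plan is to push the orthogonal--idempotent decomposition through the Euclidean inner product, thereby reducing the computation of $\mathcal{C}^{\bot}$ over $\mathcal{R}$ to the three Euclidean duals $\mathcal{C}_1^{\bot},\mathcal{C}_2^{\bot},\mathcal{C}_3^{\bot}$ over $\mathbb{F}_p$.

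First I would record how the inner product interacts with the decomposition. For $x=\eta_1x_1+\eta_2x_2+\eta_3x_3$ and $y=\eta_1y_1+\eta_2y_2+\eta_3y_3$ in $\mathcal{R}^n$ with $x_i,y_i\in\mathbb{F}_p^n$, the identities $\eta_i^2=\eta_i$ and $\eta_i\eta_j=0$ for $i\neq j$ yield
$$x\cdot y=\eta_1(x_1\cdot y_1)+\eta_2(x_2\cdot y_2)+\eta_3(x_3\cdot y_3),$$
where the inner products on the right are taken in $\mathbb{F}_p$. Since $\eta_1,\eta_2,\eta_3$ are nonzero and mutually orthogonal they are $\mathbb{F}_p$-linearly independent, so this element of $\mathcal{R}$ is zero precisely when $x_1\cdot y_1=x_2\cdot y_2=x_3\cdot y_3=0$.

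Next I would prove the two inclusions. For ``$\supseteq$'': given $x_i\in\mathcal{C}_i^{\bot}$ and an arbitrary $y=\eta_1y_1+\eta_2y_2+\eta_3y_3\in\mathcal{C}=\eta_1\mathcal{C}_1\oplus\eta_2\mathcal{C}_2\oplus\eta_3\mathcal{C}_3$ (so $y_i\in\mathcal{C}_i$), each $x_i\cdot y_i=0$, hence $x\cdot y=0$ by the displayed formula, so $\eta_1x_1+\eta_2x_2+\eta_3x_3\in\mathcal{C}^{\bot}$. For ``$\subseteq$'': take $x=\eta_1x_1+\eta_2x_2+\eta_3x_3\in\mathcal{C}^{\bot}$. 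Since $\eta_1y_1\in\mathcal{C}$ for every $y_1\in\mathcal{C}_1$ (take the other two components to be $0$), we get $0=x\cdot(\eta_1y_1)=\eta_1(x_1\cdot y_1)$; as $\eta_1\neq0$ and multiplication by a nonzero element of $\mathbb{F}_p$ is injective on $\mathcal{R}$, this forces $x_1\cdot y_1=0$, so $x_1\in\mathcal{C}_1^{\bot}$, and symmetrically $x_2\in\mathcal{C}_2^{\bot}$, $x_3\in\mathcal{C}_3^{\bot}$. Combining the two inclusions gives the claimed equality.

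The main point to be careful about is that $\mathcal{R}$ is not an integral domain, so one cannot cancel the $\eta_i$ naively; the argument must route the vanishing of $x\cdot y$ through the idempotents as above. (If one prefers, the ``$\subseteq$'' inclusion can instead be deduced by counting: $|\mathcal{C}|=|\mathcal{C}_1||\mathcal{C}_2||\mathcal{C}_3|$ and, $\mathcal{R}$ being a finite Frobenius ring, $|\mathcal{C}|\,|\mathcal{C}^{\bot}|=|\mathcal{R}|^{n}=p^{3n}=\prod_{i=1}^{3}|\mathcal{C}_i|\,|\mathcal{C}_i^{\bot}|$, whence $|\mathcal{C}^{\bot}|=|\mathcal{C}_1^{\bot}||\mathcal{C}_2^{\bot}||\mathcal{C}_3^{\bot}|=|\eta_1\mathcal{C}_1^{\bot}\oplus\eta_2\mathcal{C}_2^{\bot}\oplus\eta_3\mathcal{C}_3^{\bot}|$, and equality follows from the first inclusion.) Everything else is a short direct verification, so I do not expect any real obstacle beyond this.
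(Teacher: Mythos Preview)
Your argument is correct. The inner-product identity $x\cdot y=\eta_1(x_1\cdot y_1)+\eta_2(x_2\cdot y_2)+\eta_3(x_3\cdot y_3)$ is exactly the right tool, and your two inclusions (as well as the alternative counting argument via the Frobenius property) are sound. One small note: in the ``$\subseteq$'' step you deduce $x_1\cdot y_1=0$ from $\eta_1(x_1\cdot y_1)=0$; this is fine because $x_1\cdot y_1\in\mathbb{F}_p$ and $\eta_1\neq0$, but it might read more cleanly to phrase it as ``$\eta_1$ is not annihilated by any nonzero scalar in $\mathbb{F}_p$'' rather than appealing to injectivity of multiplication.

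For comparison, the paper does not supply its own proof at all: it simply writes ``Similar to \cite[Theorem 3]{G}'' and states the result. Gao's argument in that reference is the same idempotent-splitting of the Euclidean form that you carry out here, so your proposal is essentially the expected proof written out in full. You might also remark that your argument never uses the $(1-2u^2)$-constacyclic hypothesis; the dual decomposition holds for any linear code over $\mathcal{R}$, which is indeed how the cited result in \cite{G} is stated.
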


As a consequence of the previous theorems and \cite[Theorem 3.3]{JLU} we have the next result.
\begin{corollary}
Let $\mathcal{C}=\langle\eta_1g_1(x),\eta_2g_2(x),\eta_3g_3(x)\rangle$ be a $(1-2u^2)$-constacyclic code of length $n$ over $\mathcal{R}$ and
$g_1(x),g_2(x),g_3(x)$ be the monic generator polynomials of $\mathcal{C}_1,\mathcal{C}_2,\mathcal{C}_3$, respectively.
Suppose that $g_1(x)h_1(x)=x^n-1$ and $g_2(x)h_2(x)=g_3(x)h_3(x)=x^n+1$ and let $h(x)=\eta_1h_1(x)+\eta_2h_2(x)+\eta_3h_3(x)$. The following conditions hold:
\begin{enumerate}
\item $\mathcal{C}^{\bot}=\langle\eta_1h_1^{\bot}(x),\eta_2h_2^{\bot}(x),\eta_3h^{\bot}_3(x)\rangle$ 
and $|\mathcal{C}^{\bot}|=p^{\sum_{i=1}^3{\rm deg}(g_i)}$.
\item $\mathcal{C}^{\bot}=\langle h^{\bot}(x)\rangle$, $h^{\bot}(x)=\eta_1h_1^{\bot}(x)+\eta_2h_2^{\bot}(x)+\eta_3h^{\bot}_3(x)$,
\end{enumerate}
where for $1\leq i\leq3$, $h_i^{\bot}(x)$ is the reciprocal polynomial of $h_i(x)$, and $h^{\bot}(x)$ is the reciprocal polynomial of $h(x)$.
\end{corollary}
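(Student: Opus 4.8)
The plan is to transport the computation to the three component codes over $\mathbb{F}_p$ by means of the idempotent decomposition, invoke the classical description of the duals of cyclic and negacyclic codes over a field, and reassemble using the structural theorems already proved. (Throughout, $p$ is odd, so that $2^{-1}$, hence $\eta_2$ and $\eta_3$, are defined.)

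First I would record, from the theorem immediately preceding this corollary, that $\mathcal{C}^\bot=\eta_1\mathcal{C}_1^\bot\oplus\eta_2\mathcal{C}_2^\bot\oplus\eta_3\mathcal{C}_3^\bot$. By Theorem \ref{T5}, $\mathcal{C}_1=\langle g_1(x)\rangle$ is cyclic of length $n$ over $\mathbb{F}_p$ with check polynomial $h_1(x)$, while $\mathcal{C}_2=\langle g_2(x)\rangle$ and $\mathcal{C}_3=\langle g_3(x)\rangle$ are negacyclic with check polynomials $h_2(x)$ and $h_3(x)$. Since $g_i(0)h_i(0)=\pm1\neq0$, each $h_i(x)$ has nonzero constant term, so its reciprocal polynomial $h_i^\bot(x)$ has degree exactly $\deg h_i=n-\deg g_i$ and leading coefficient $h_i(0)$. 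By \cite[Theorem 3.3]{JLU}, $h_i(0)^{-1}h_i^\bot(x)$ is the monic generator polynomial of $\mathcal{C}_i^\bot$; in particular $\mathcal{C}_i^\bot=\langle h_i^\bot(x)\rangle$ for $1\leq i\leq3$.

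For part (1), I would apply Theorem \ref{T6} to $\mathcal{C}^\bot=\eta_1\mathcal{C}_1^\bot\oplus\eta_2\mathcal{C}_2^\bot\oplus\eta_3\mathcal{C}_3^\bot$ with the monic generators $h_i(0)^{-1}h_i^\bot(x)$: this yields $\mathcal{C}^\bot=\langle\eta_1h_1^\bot(x),\eta_2h_2^\bot(x),\eta_3h_3^\bot(x)\rangle$, the unit scalars $h_i(0)^{-1}\in\mathbb{F}_p^\times$ being absorbed into the generators, and $|\mathcal{C}^\bot|=p^{3n-\sum_{i=1}^3\deg h_i^\bot}=p^{3n-\sum_{i=1}^3(n-\deg g_i)}=p^{\sum_{i=1}^3\deg g_i}$; equivalently the cardinality follows from $|\mathcal{C}|\,|\mathcal{C}^\bot|=|\mathcal{R}|^n=p^{3n}$ together with Theorem \ref{T6}. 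For part (2), I would first note, by the same idempotent manipulation used to prove the theorem on the existence of a unique generator polynomial (multiplying the combined generator by $\eta_i$ and using $\eta_i^2=\eta_i$ and $\eta_i\eta_j=0$ for $i\neq j$), that $\langle\eta_1h_1^\bot(x),\eta_2h_2^\bot(x),\eta_3h_3^\bot(x)\rangle=\langle\eta_1h_1^\bot(x)+\eta_2h_2^\bot(x)+\eta_3h_3^\bot(x)\rangle$. It then remains to identify $\eta_1h_1^\bot(x)+\eta_2h_2^\bot(x)+\eta_3h_3^\bot(x)$ with the reciprocal polynomial $h^\bot(x)$ of $h(x)=\eta_1h_1(x)+\eta_2h_2(x)+\eta_3h_3(x)$. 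Writing $d_i=\deg h_i$ and $d=\max\{d_1,d_2,d_3\}=\deg h$, a direct computation gives $h^\bot(x)=\eta_1x^{d-d_1}h_1^\bot(x)+\eta_2x^{d-d_2}h_2^\bot(x)+\eta_3x^{d-d_3}h_3^\bot(x)$, which differs from $\eta_1h_1^\bot(x)+\eta_2h_2^\bot(x)+\eta_3h_3^\bot(x)$ by the factor $\eta_1x^{d-d_1}+\eta_2x^{d-d_2}+\eta_3x^{d-d_3}$. Since $x^n=1-2u^2$ in $\mathcal{R}_n$ and $(1-2u^2)^2=1$, the element $x$ is a unit of $\mathcal{R}_n$ with $x^{-1}=(1-2u^2)x^{n-1}$; hence that factor is a unit of $\mathcal{R}_n$, the two polynomials generate the same ideal, and $\mathcal{C}^\bot=\langle h^\bot(x)\rangle$.

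The reduction to the component codes and the bookkeeping of degrees and cardinalities are routine. The point that genuinely needs care is the last step: the reciprocal polynomial of the assembled polynomial $h(x)$ is not literally $\eta_1h_1^\bot(x)+\eta_2h_2^\bot(x)+\eta_3h_3^\bot(x)$ but only agrees with it up to a monomial unit in each idempotent component, so one must invoke the invertibility of $x$ in $\mathcal{R}_n$ to conclude equality of the generated ideals. One should also check at the outset that each $h_i(x)$ has nonzero constant term, which is what makes the reciprocal polynomials behave as expected.
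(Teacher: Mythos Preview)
Your argument is correct and follows exactly the route the paper indicates: the paper offers no proof beyond the sentence ``As a consequence of the previous theorems and \cite[Theorem 3.3]{JLU},'' and you have simply filled in those details via the decomposition $\mathcal{C}^\bot=\eta_1\mathcal{C}_1^\bot\oplus\eta_2\mathcal{C}_2^\bot\oplus\eta_3\mathcal{C}_3^\bot$, Theorems \ref{T5}--\ref{T6}, and the cited result on duals of constacyclic codes. You actually go a step beyond the paper in noticing that the literal reciprocal of $h(x)$ need not coincide with $\eta_1h_1^\bot(x)+\eta_2h_2^\bot(x)+\eta_3h_3^\bot(x)$ when the $\deg h_i$ differ, and in repairing this at the level of ideals via the invertibility of $x$ in $\mathcal{R}_n$; the paper's statement silently assumes this identification.
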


\begin{theorem}
Let $\mu:\mathcal{R}[x]/\langle x^n-1\rangle\to\mathcal{R}[x]/\langle x^n-(1-2u^2)\rangle$ be defined as 
$$\mu\big(c(x)\big)=c\big((1-2u^2)x\big).$$
If $n$ is odd, then $\mu$ is a ring isomorphism.
\end{theorem}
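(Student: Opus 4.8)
The plan is to verify directly that $\mu$ is a well-defined ring homomorphism and then to exhibit an explicit two-sided inverse. The arithmetic fact I would record first is that $1-2u^2$ is a unit of $\mathcal{R}$ which is its own inverse: since $u^3=u$ we have $u^4=u^2$, hence $(1-2u^2)^2=1-4u^2+4u^4=1$. Together with the observation noted earlier in the paper that $(1-2u^2)^n=1-2u^2$ when $n$ is odd, this is essentially everything the proof needs.

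First I would check that $\mu$ is well defined. The assignment $x\mapsto(1-2u^2)x$, together with the identity on $\mathcal{R}$, extends to a ring endomorphism $\tilde\mu$ of the polynomial ring $\mathcal{R}[x]$. Using that $n$ is odd and that $(1-2u^2)^{-1}=1-2u^2$,
\[
\tilde\mu(x^n-1)=(1-2u^2)^nx^n-1=(1-2u^2)x^n-1=(1-2u^2)\bigl(x^n-(1-2u^2)\bigr),
\]
so $\tilde\mu\bigl(\langle x^n-1\rangle\bigr)\subseteq\langle x^n-(1-2u^2)\rangle$. Therefore $\tilde\mu$ descends to a map $\mu$ on the quotient rings, and being the composite of the ring homomorphism $\tilde\mu$ with the canonical projection onto $\mathcal{R}[x]/\langle x^n-(1-2u^2)\rangle$, it is automatically a ring homomorphism.

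Next I would produce the inverse. Define $\nu\colon\mathcal{R}[x]/\langle x^n-(1-2u^2)\rangle\to\mathcal{R}[x]/\langle x^n-1\rangle$ by the same rule $\nu(c(x))=c((1-2u^2)x)$. The identical computation, with the roles of the two moduli interchanged, gives $\tilde\nu(x^n-(1-2u^2))=(1-2u^2)x^n-(1-2u^2)=(1-2u^2)(x^n-1)$, so $\nu$ is likewise a well-defined ring homomorphism. Finally, both composites $\mu\circ\nu$ and $\nu\circ\mu$ are induced by the substitution $x\mapsto(1-2u^2)^2x=x$, hence are the respective identity maps; consequently $\mu$ is bijective and thus a ring isomorphism.

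The only step that genuinely uses the hypothesis — and so the \emph{main obstacle}, modest as it is — is well-definedness: oddness of $n$ is exactly what forces $(1-2u^2)^n=1-2u^2$, which is what places $\tilde\mu(x^n-1)$ inside the target ideal. For even $n$ one would instead obtain $\tilde\mu(x^n-1)=x^n-1$, which does not lie in $\langle x^n-(1-2u^2)\rangle$, so the map would fail to descend to the quotients. Everything else is formal manipulation with substitution homomorphisms.
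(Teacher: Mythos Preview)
Your proof is correct. The route is close to the paper's but packages bijectivity differently: the paper establishes well-definedness by the same computation you give, then proves injectivity by performing the reverse substitution $x\mapsto(1-2u^2)x$ on a congruence modulo $x^n-(1-2u^2)$, and finally invokes finiteness of the two quotient rings to deduce surjectivity. You instead promote that reverse substitution to an explicit two-sided inverse $\nu$, which is slightly cleaner and dispenses with the finiteness appeal altogether; your argument would go through verbatim over any coefficient ring in which $(1-2u^2)^2=1$. You also make explicit that $\mu$ is a ring homomorphism (as the quotient of a substitution endomorphism), a point the paper leaves implicit.
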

\begin{proof}
Suppose that $a(x)\equiv b(x)$ (mod $x^n-1$). Then there exists $h(x)\in \mathcal{R}[x]$ such that $a(x)-b(x)=(x^n-1)h(x)$.
Therefore 
\begin{eqnarray*}
a\big((1-2u^2)x\big)-b\big((1-2u^2)x\big)&=&\big((1-2u^2)^nx^n-1\big)h\big((1-2u^2)x\big)\\
&=&\big((1-2u^2)x^n-(1-2u^2)^2\big)h\big((1-2u^2)x\big)\\
&=&(1-2u^2)\big(x^n-(1-2u^2)\big)h\big((1-2u^2)x\big),
\end{eqnarray*}
which means if $a(x)\equiv b(x)$ (mod $x^n-1)$, then $a\big((1-2u^2)x\big)\equiv b\big((1-2u^2)x\big)$ $\big($mod $x^n-(1-2u^2)\big)$.
Now, assume that $a\big((1-2u^2)x\big)\equiv b\big((1-2u^2)x\big)$ $\big($mod
$x^n-(1-2u^2)\big)$. Then there exists $q(x)\in \mathcal{R}[x]$ such that $$a\big((1-2u^2)x\big)-b\big((1-2u^2)x\big)=\big(x^n-(1-2u^2)\big)q(x).$$
Hence
\begin{eqnarray*}
a(x)-b(x)&=&a\big((1-2u^2)^2x\big)-b\big((1-2u^2)^2x\big)\\
&=&\big((1-2u^2)^nx^n-(1-2u^2)\big)q\big((1-2u^2)x\big)\\
&=&\big((1-2u^2)x^n-(1-2u^2)\big)q\big((1-2u^2)x\big)\\
&=&(1-2u^2)(x^n-1)q\big((1-2u^2)x\big),
\end{eqnarray*}
which means if $a\big((1-2u^2)x\big)\equiv b\big((1-2u^2)x\big)$ $\big($mod $x^n-(1-2u^2)\big)$, then $a(x)\equiv b(x)$ (mod $x^n-1)$.
Consequently $a(x)\equiv b(x)$ (mod $x^n-1)\Leftrightarrow a\big((1-2u^2)x\big)\equiv b\big((1-2u^2)x\big)$ $\big($mod
$x^n-(1-2u^2)\big)$. Note that one side of the implication tells us that $\mu$ is well defined and the other side tells
us that it is injective, but since the rings are finite this proves that $\mu$ is an isomorphism.
\end{proof}

\begin{corollary}
Let $n$ be an odd natural number. Then $I$ is an ideal of $\mathcal{R}[x]/\langle x^n-1\rangle$ if and only if $\mu(I)$ is an ideal of $\mathcal{R}[x]/\langle x^n-(1-2u^2)\rangle$.
\end{corollary}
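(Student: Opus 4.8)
The plan is to derive this at once from the preceding theorem, which asserts that $\mu$ is a ring isomorphism whenever $n$ is odd. The corollary is then nothing more than the standard fact that a ring isomorphism carries ideals to ideals in both directions, so the only work is to spell this out.

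First I would record that $\mu$ is a bijective ring homomorphism, and hence that its inverse $\mu^{-1}$ is again a ring isomorphism. For the forward implication, assume $I$ is an ideal of $\mathcal{R}[x]/\langle x^n-1\rangle$. Then $\mu(I)$ is an additive subgroup of $\mathcal{R}[x]/\langle x^n-(1-2u^2)\rangle$ since $\mu$ preserves addition. To verify absorption, take $\mu(a)\in\mu(I)$ with $a\in I$ and an arbitrary element $r$ of $\mathcal{R}[x]/\langle x^n-(1-2u^2)\rangle$; by surjectivity of $\mu$ we may write $r=\mu(s)$, and then $r\,\mu(a)=\mu(s)\mu(a)=\mu(sa)\in\mu(I)$ because $sa\in I$. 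Thus $\mu(I)$ is an ideal.

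For the converse, suppose $\mu(I)$ is an ideal of $\mathcal{R}[x]/\langle x^n-(1-2u^2)\rangle$. Since $\mu$ is a bijection, $I=\mu^{-1}\big(\mu(I)\big)$, and applying the argument of the previous paragraph to the ring isomorphism $\mu^{-1}$ (with $\mu(I)$ now playing the role of $I$) shows that $I$ is an ideal of $\mathcal{R}[x]/\langle x^n-1\rangle$.

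I anticipate no genuine obstacle here: once the isomorphism supplied by the previous theorem is in hand, the argument is purely formal. The only point worth a word of care is the appeal to surjectivity of $\mu$ in the absorption step, and that is already guaranteed by $\mu$ being an isomorphism.
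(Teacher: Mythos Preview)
Your proposal is correct and matches the paper's approach exactly: the paper states this corollary with no proof, since it is an immediate consequence of the preceding theorem that $\mu$ is a ring isomorphism, together with the standard fact that ring isomorphisms send ideals to ideals in both directions. You have simply made that implicit reasoning explicit.
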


\begin{corollary}\label{C13}
Let $\mu$ be the permutation of $\mathcal{R}^n$ with $n$ odd such that 
$$\bar{\mu}(c_0,c_1,\dots,c_{n-1})=(c_0,(1-2u^2)c_1,(1-2u^2)^2c_2,\dots,(1-2u^2)^ic_i,\dots,(1-2u^2)^{n-1}c_{n-1}),$$
and $\mathcal{D}$ be a subset of $\mathcal{R}^n$. Then $\mathcal{D}$ is a cyclic code if and only if
$\bar{\mu}(\mathcal{D})$ is a  $(1-2u^2)$-constacyclic code.
\end{corollary}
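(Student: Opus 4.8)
The plan is to transport the statement to the language of polynomial representations and then lean on the ring isomorphism $\mu\colon\mathcal{R}[x]/\langle x^n-1\rangle\to\mathcal{R}[x]/\langle x^n-(1-2u^2)\rangle$ of the preceding theorem, which is available precisely because $n$ is odd. First I would record that $\bar\mu$ genuinely is a permutation of $\mathcal{R}^n$: from $u^3=u$ one gets $u^4=u^2$ and hence $(1-2u^2)^2=1$, so $1-2u^2$ is a unit of $\mathcal{R}$, each coordinate map $c_i\mapsto(1-2u^2)^ic_i$ is an $\mathcal{R}$-linear bijection, and therefore $\bar\mu$ is an $\mathcal{R}$-linear bijection of $\mathcal{R}^n$. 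In particular $\mathcal{D}$ is a linear code exactly when $\bar\mu(\mathcal{D})$ is, so it remains only to track the shift structure.

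The heart of the argument is the identity $P\circ\bar\mu=\mu\circ P$, where $P$ denotes the polynomial representation. To see this I would take $(c_0,\dots,c_{n-1})\in\mathcal{R}^n$, form $c(x)=\sum_{i=0}^{n-1}c_ix^i$, and compute $\mu(c(x))=c\big((1-2u^2)x\big)=\sum_{i=0}^{n-1}(1-2u^2)^ic_ix^i$; since every exponent occurring here is $<n$, no reduction modulo $x^n-1$ intervenes, and the coefficient vector read off is exactly $\bar\mu(c_0,\dots,c_{n-1})$. Hence $P\big(\bar\mu(\mathcal{D})\big)=\mu\big(P(\mathcal{D})\big)$ for every $\mathcal{D}\subseteq\mathcal{R}^n$.

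With this in place I would assemble a chain of equivalences: $\mathcal{D}$ is a cyclic code $\Leftrightarrow$ $P(\mathcal{D})$ is an ideal of $\mathcal{R}[x]/\langle x^n-1\rangle$ (by the same reasoning as the opening theorem of the paper, with $1$ in place of $1-2u^2$) $\Leftrightarrow$ $\mu\big(P(\mathcal{D})\big)=P\big(\bar\mu(\mathcal{D})\big)$ is an ideal of $\mathcal{R}[x]/\langle x^n-(1-2u^2)\rangle$ (by the preceding corollary) $\Leftrightarrow$ $\bar\mu(\mathcal{D})$ is a $(1-2u^2)$-constacyclic code (by the opening theorem of the paper). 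Concatenating these gives the assertion. I do not expect a genuine obstacle here, since the corollary is essentially a repackaging of results already in hand; the one spot deserving a moment's attention is the identity $P\circ\bar\mu=\mu\circ P$, and in particular the remark that applying $\mu$ to a representative of degree $<n$ keeps the degree $<n$, so that $\bar\mu$ lines up the two quotient rings without any hidden ``wrap-around'' contribution. Everything else is bookkeeping.
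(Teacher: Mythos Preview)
Your proposal is correct and matches the paper's intended route: the paper states Corollary~\ref{C13} without proof, as an immediate consequence of the ring isomorphism $\mu$ and the ideal correspondence of the corollary just before it, and your argument spells out precisely that deduction via the identity $P\circ\bar\mu=\mu\circ P$. There is nothing to add.
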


\begin{definition}
Let $\tau$ be the following permutation of $\{0,1,\dots,2n-1\}$ with $n$ odd:
$$\tau=(1,n+1)(3,n+3)\cdots(2i+1,n+2i+1)\cdots(n-2,2n-2).$$
The Nechaev permutation is the permutation $\pi$ of $\mathbb{F}_p^{2n}$ defined by
$$\pi(c_0,c_1,\dots,c_{2n-1})=(c_{\tau(0)},c_{\tau(1)},\dots,c_{\tau(2n-1)}).$$
\end{definition}

\begin{proposition}\label{P15}
Let $\mu$ be defined as above. If $\pi$ is the Nechaev permutation and $n$ is odd, then $\Phi\bar{\mu}=\pi\Phi$.
\end{proposition}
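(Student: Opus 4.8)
The plan is to reduce the identity to a one-coordinate computation and then match up indices. First I would record the arithmetic fact that $(1-2u^2)^2=1$ in $\mathcal{R}$: since $u^3=u$ we have $u^4=u^2$, so $(1-2u^2)^2=1-4u^2+4u^4=1-4u^2+4u^2=1$. Hence $(1-2u^2)^i=1$ when $i$ is even and $(1-2u^2)^i=1-2u^2$ when $i$ is odd, so $\bar{\mu}$ is simply the map that fixes each coordinate of even index and multiplies each coordinate of odd index by $1-2u^2$.

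Next I would carry out the key single-coordinate computation. For $r=a+bu+cu^2\in\mathcal{R}$ one has, using $u^3=u$, that $(1-2u^2)r=a-bu+(-2a-c)u^2$, and therefore $\Phi\big((1-2u^2)r\big)=\big(-(-2a-c),\,2a+(-2a-c)\big)=(2a+c,-c)$. Since $\Phi(r)=(-c,2a+c)$, this says precisely that multiplying a ring element by $1-2u^2$ has the effect, on the two $\mathbb{F}_p$-coordinates of its Gray image, of transposing them.

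Combining the two observations, I would write $\Phi(\bar{r})=(\alpha_0,\dots,\alpha_{n-1},\beta_0,\dots,\beta_{n-1})$, where $\alpha_i=-c_i$ and $\beta_i=2a_i+c_i$. By the first paragraph the $i$-th entry of $\bar{\mu}(\bar{r})$ equals $r_i$ when $i$ is even and $(1-2u^2)r_i$ when $i$ is odd, so by the second paragraph the vector $\Phi(\bar{\mu}(\bar{r}))$ has its pair of entries in positions $i$ and $n+i$ equal to $(\alpha_i,\beta_i)$ when $i$ is even and to $(\beta_i,\alpha_i)$ when $i$ is odd. On the other hand $\pi$ is the permutation determined by $\tau$, which is the product of the transpositions $(i,n+i)$ over all odd $i$ with $1\le i\le n-2$; because $n$ is odd these are exactly the odd indices in $\{0,1,\dots,n-1\}$, and $\tau(n+i)=i$ for such $i$ since $\tau$ is an involution. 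Thus $\pi(\Phi(\bar{r}))$ leaves the entries in positions $i$ and $n+i$ fixed when $i$ is even and interchanges them when $i$ is odd, producing the same vector; hence $\Phi\bar{\mu}=\pi\Phi$.

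The only delicate point I anticipate is the index bookkeeping in the last step: one must check that the transpositions appearing in the definition of $\tau$ run over precisely the odd indices below $n$ (which is where the hypothesis that $n$ is odd enters, so that $n-2$ is the largest such index and $n-1$ is even), and that $\tau$, being a product of disjoint transpositions, sends $n+i$ back to $i$. Everything else is the routine computation above.
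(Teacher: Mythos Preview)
Your proof is correct and follows essentially the same direct-computation route as the paper: both compute $\Phi(\bar{\mu}(\bar{r}))$ explicitly and match it against $\pi(\Phi(\bar{r}))$. Your version is more structured---you isolate the key observation that multiplication by $1-2u^2$ simply transposes the two $\mathbb{F}_p$-components of the Gray image, which the paper leaves implicit in its displayed formula---but the underlying argument is the same.
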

\begin{proof}
Let $\bar{r}=(r_0,r_1,\dots,r_i,\dots,r_{n-1})\in\mathcal{R}^n$ where $r_i=a_i+b_iu+c_iu^2$, $0\leq i\leq n-1$. From 
$$\bar{\mu}(\bar{r})=(r_0,(1-2u^2)r_1,\dots,(1-2u^2)^ir_i,\dots,(1-2u^2)^{n-1}r_{n-1})$$ 
it follows that
$$(\Phi\bar{\mu})(\bar{r})=(-c_0,2a_1+c_1,-c_{2},2a_3+c_3,\dots,2a_{n-2}+c_{n-2},-c_{n-1},$$
$$\hspace{4cm}2a_0+c_0,-c_1,2a_2+c_2,-c_3,\dots,-c_{n-2},2a_{n-1}+c_{n-1}),$$
is equal to $(\pi\Phi)(\bar{r})$.
\end{proof}

\begin{corollary}
Let $\pi$ be the Nechaev permutation and $n$ be odd. If $\Gamma$ is the Gray image of a cyclic code over
$\mathcal{R}$, then $\pi(\Gamma)$ is a  cyclic code.
\end{corollary}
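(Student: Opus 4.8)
The plan is to obtain this corollary as a purely formal consequence of three facts already in hand: Corollary~\ref{C13}, Theorem~\ref{T3}, and Proposition~\ref{P15}. Fix a cyclic code $\mathcal{D}$ over $\mathcal{R}$ of length $n$ with $n$ odd, and write $\Gamma=\Phi(\mathcal{D})$ for its Gray image.

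First I would invoke Corollary~\ref{C13}: since $\mathcal{D}$ is cyclic and $n$ is odd, $\bar{\mu}(\mathcal{D})$ is a $(1-2u^2)$-constacyclic code over $\mathcal{R}$ of length $n$ (note that $\bar{\mu}$ is a permutation of $\mathcal{R}^n$, so $\bar{\mu}(\mathcal{D})$ is again a linear code of length $n$, which is what Theorem~\ref{T3} requires as input). Next, apply Theorem~\ref{T3} to $\bar{\mu}(\mathcal{D})$: its Gray image $\Phi\bigl(\bar{\mu}(\mathcal{D})\bigr)$ is a cyclic code over $\mathbb{F}_p$ of length $2n$. Finally, Proposition~\ref{P15} gives the operator identity $\Phi\bar{\mu}=\pi\Phi$ on $\mathcal{R}^n$ (valid because $n$ is odd), hence
$$\Phi\bigl(\bar{\mu}(\mathcal{D})\bigr)=(\Phi\bar{\mu})(\mathcal{D})=(\pi\Phi)(\mathcal{D})=\pi\bigl(\Phi(\mathcal{D})\bigr)=\pi(\Gamma).$$
Combining the last two displays shows that $\pi(\Gamma)$ is a cyclic code, which is exactly the assertion.

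There is no real obstacle here; the work has all been done in the preceding lemmas and the corollary just records the composite of the commuting diagrams $\Phi\varrho=\sigma\Phi$ and $\Phi\bar{\mu}=\pi\Phi$ together with Corollary~\ref{C13}. The only thing meriting a moment's attention is bookkeeping of hypotheses: every ingredient ($\bar{\mu}$, $\pi$, Corollary~\ref{C13}, Proposition~\ref{P15}) carries the standing assumption that $n$ is odd, so one should state once at the outset that $n$ is odd and that $\mathcal{D}$ is cyclic, after which each step applies verbatim.
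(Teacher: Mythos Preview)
Your proof is correct and follows essentially the same approach as the paper: invoke Corollary~\ref{C13} to see that $\bar{\mu}(\mathcal{D})$ is $(1-2u^2)$-constacyclic, apply Theorem~\ref{T3} to conclude its Gray image is cyclic, and use Proposition~\ref{P15} to identify that image with $\pi(\Gamma)$. The only difference is the order in which the three ingredients are cited.
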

\begin{proof}
Let $\Gamma$ be such that $\Gamma=\Phi(\mathcal{D})$ where $\mathcal{D}$ is a cyclic code over $\mathcal{R}$. From Proposition \ref{P15},
$(\Phi\bar{\mu})(\mathcal{D})=(\pi\Phi)(\mathcal{D})=\pi(\Gamma)$. We know from Corollary \ref{C13} that $\bar{\mu}(\mathcal{D})$ is a 
$(1-2u^2)$-constacyclic code. Thus $(\Phi\bar{\mu})(\mathcal{D})=\pi(\Gamma)$ is a cyclic code, by Theorem \ref{T3}.
\end{proof}

Recall that two codes $\mathcal{C}_1$ and $\mathcal{C}_2$ of length $n$ over $\mathcal{R}$ are said to be equivalent if there exists a permutation $w$ of
$\{0,1,\dots,n-1\}$ such that $\mathcal{C}_2=\bar{w}(\mathcal{C}_1)$ where $\bar{w}$ is the permutation of $\mathcal{R}^n$ such that
$\bar{w}(c_0,c_1,\dots,c_i,\dots,c_{n-1})=(c_{w(0)},c_{w(1)},\dots,c_{w(i)},\dots,c_{w(n-1)})$.
\begin{corollary}
The Gray image of a cyclic code over $\mathcal{R}$ of odd length is equivalent to a cyclic code.
\end{corollary}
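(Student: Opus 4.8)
The plan is to deduce this immediately from the preceding corollary together with the paper's definition of equivalence. Let $\mathcal{D}$ be a cyclic code over $\mathcal{R}$ of odd length $n$, and set $\Gamma=\Phi(\mathcal{D})\subseteq\mathbb{F}_p^{2n}$, its Gray image. First I would recall that, by Corollary \ref{C13}, the image $\bar{\mu}(\mathcal{D})$ is a $(1-2u^2)$-constacyclic code of length $n$ over $\mathcal{R}$; hence by Theorem \ref{T3} its Gray image $\Phi(\bar{\mu}(\mathcal{D}))$ is a cyclic code over $\mathbb{F}_p$ of length $2n$. By Proposition \ref{P15} we have $\Phi\bar{\mu}=\pi\Phi$, so $\Phi(\bar{\mu}(\mathcal{D}))=\pi(\Phi(\mathcal{D}))=\pi(\Gamma)$. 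Therefore $\pi(\Gamma)$ is a cyclic code of length $2n$ over $\mathbb{F}_p$.

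It remains to observe that $\Gamma$ is equivalent to $\pi(\Gamma)$. This is where one uses that the Nechaev permutation $\pi$ is, by construction, nothing but the coordinate permutation of $\mathbb{F}_p^{2n}$ induced by the permutation $\tau$ of the index set $\{0,1,\dots,2n-1\}$; in the notation of the definition of equivalence given just before the statement, $\pi=\bar{w}$ with $w=\tau$. Consequently $\pi(\Gamma)=\bar{\tau}(\Gamma)$ is, by definition, equivalent to $\Gamma$. Combining the two paragraphs, $\Gamma$ is equivalent to the cyclic code $\pi(\Gamma)$, which is the assertion.

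I do not expect any real obstacle: all the substantive content lies in Theorem \ref{T3}, Corollary \ref{C13}, and the identity $\Phi\bar{\mu}=\pi\Phi$ of Proposition \ref{P15}, all already available. The only point requiring a little care is to check that the paper's notion of equivalence is precisely permutation equivalence of coordinates (with no monomial scaling), so that a pure coordinate permutation such as $\pi$ qualifies; this is exactly how equivalence was defined above, so the conclusion follows at once.
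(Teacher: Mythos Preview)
Your proposal is correct and follows essentially the same route as the paper: the paper derives (in the immediately preceding corollary) that $\pi(\Gamma)$ is cyclic via Proposition~\ref{P15}, Corollary~\ref{C13}, and Theorem~\ref{T3}, and then the present statement is an immediate consequence of the definition of equivalence, exactly as you argue. The only difference is cosmetic: you re-prove the content of that preceding corollary inline rather than citing it.
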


\begin{example}
Let $n=7$ and $x^7-1=(x-1)(x^3+x+1)(x^3+x^2+1)$ in $\mathcal{R}[x]$.
Applying the ring isomorphism $\mu$, we have $$x^7-(1-2u^2)=(x-(1-2u^2))(x^3+x+(1-2u^2))(x^3+(1-2u^2)x^2+(1-2u^2)).$$
Let $f_1=x-(1-2u^2)$ and $f_2=x^3+x+(1-2u^2)$.
If $\mathcal{C}=(f_1f_2)$, then by Theorem \ref{T3}, we know that the Gray image of the $(1-2u^2)$-constacyclic code $\mathcal{C}$ is a
cyclic code.
\end{example}

\vspace{5mm} \noindent \footnotesize 
\begin{minipage}[b]{10cm}
Hojjat Mostafanasab \\
Department of Mathematics and Applications, \\ 
University of Mohaghegh Ardabili, \\ 
P. O. Box 179, Ardabil, Iran. \\
Email: h.mostafanasab@gmail.com, \hspace{1mm} h.mostafanasab@uma.ac.ir
\end{minipage}\\

\vspace{5mm} \noindent \footnotesize 
\begin{minipage}[b]{10cm}
Negin Karimi \\
Department of Mathematics and Applications, \\ 
University of Mohaghegh Ardabili, \\ 
P. O. Box 179, Ardabil, Iran. \\
Email: neginkarimi8834@gmail.com, \hspace{1mm} neginkarimi@uma.ac.ir
\end{minipage}\\

\end{document}